\newtheorem{theorem}{Theorem}
\newtheorem{lemma}{Lemma}
\newtheorem{corollary}{Corollary}
\newtheorem{remark}{Remark}
\newcommand{\OPT}{\mathrm{OPT}}
\newcommand{\ALG}{\mathrm{ALG}}
\newcommand{\IGNORE}{\mathrm{IGNORE}}
\newcommand{\pmax}{p_{\max}}
\newcommand{\argmax}{\mathop{\rm arg\,max}}
\newcommand{\ot}{\leftarrow}
\renewcommand{\mid}{:}
\renewcommand{\O}{\mathrm{O}}
\newcommand{\shortslash}{\scalebox{.8}{/}}
\crefname{algocf}{alg.}{algs.}
\Crefname{algocf}{Algorithm}{Algorithms}
\title{Scheduling on Identical Machines with Setup Time and Unknown Execution Time}
\author[1]{Yasushi Kawase}
\affil{The University of Tokyo, Japan}
\author[2]{Kazuhisa Makino}
\affil{Kyoto University, Japan}
\author[3]{Vinh Long Phan}
\affil{Toyota Motor Corporation, Japan}
\author[4]{Hanna Sumita}
\affil{Institute of Science Tokyo, Japan}
\date{}
\begin{document}

\maketitle

%TODO mandatory: add short abstract of the document
\begin{abstract}
In this study, we investigate a scheduling problem on identical machines in which jobs require initial setup before execution. We assume that an algorithm can dynamically form a batch (i.e., a collection of jobs to be processed together) from the remaining jobs. The setup time is modeled as a known monotone function of the set of jobs within a batch, while the execution time of each job remains unknown until completion. This uncertainty poses significant challenges for minimizing the makespan. We address these challenges by considering two scenarios: each job batch must be assigned to a single machine, or a batch may be distributed across multiple machines. For both scenarios, we analyze settings with and without preemption. Across these four settings, we design online algorithms that achieve asymptotically optimal competitive ratios with respect to both the number of jobs and the number of machines.
\end{abstract}

\section{Introduction}
Efficient job allocation across multiple workers or machines is crucial for optimizing productivity in industrial environments. 
For example, consider distributing computational jobs across multiple virtual machines.
The processing time for a batch of jobs consists of two components: the \emph{setup time}, which includes configuring the environment or installing necessary software or libraries, and the \emph{execution time}, which represents the duration of performing the actual tasks.
While setup times are typically known in advance, the execution time of each job is often unpredictable until the job is completed.
This uncertainty complicates the design of scheduling algorithms aimed at minimizing the \emph{makespan}, which is the time when all jobs are completed.

In this paper, we introduce a scheduling problem involving $n$ jobs on $m$ identical machines with known setup times and unknown execution times, referred to as the \emph{unknown execution time scheduling (UETS)} problem.
We assume that an algorithm can dynamically form a \emph{batch}---a collection of jobs that are processed together---from the remaining jobs.
The setup time for a batch is defined as a set function over subsets of jobs, representing the total time required to prepare the jobs in the batch for processing.
We explore two primary scenarios: one called \emph{sUETS}, in which each constructed batch must be assigned to a single machine, and the other called \emph{mUETS}, in which a batch can be distributed across multiple machines.
Additionally, we examine two settings based on whether preemption is allowed.
In the non-preemptive setting, once a batch process is started, it must be run without interruption until completion.
In the preemptive setting, the processing of a batch can be interrupted.
After an interruption, only uncompleted jobs are regrouped into batches and the process is resumed from the setup phase (completed jobs do not need to be processed again).
Note that, the non-preemptive setting generally requires longer processing times compared to the preemptive one, because preemption allows for more flexible handling of incomplete batches and may reduce redundant processing.

Our objective is to design online algorithms for these scheduling problems. We analyze the performance of an online algorithm by the competitive ratio, which is the worst-case ratio between the makespan achieved by the online algorithm and that of an optimal offline algorithm.
We assume that offline algorithms have complete knowledge of each job's execution time in advance.
We refer to the schedule produced by the optimal offline algorithm as the \emph{optimal schedule} and its makespan as the \emph{optimal makespan}.
We refer to the makespan of the schedule produced by an algorithm simply as the algorithm's makespan.
An online algorithm is said to be \emph{$\rho$-competitive} if its makespan is at most $\rho$ times the optimal makespan for any instance.
We will design online algorithms with asymptotically optimal competitive ratios with respect to both the number of machines and the number of jobs.
%To focus on the hardness introduced by uncertainty, we basically ignore issues of computational complexity in this study.

\subsection{Our results}
We study the competitive ratios for the scheduling problem with setup time and unknown execution times. 
We first show that any algorithm designed for jobs with a release time of $0$ can be adapted to handle arbitrary release times at the expense of only a constant factor increase in the competitive ratio.
We formally state this in \Cref{thm:ignore}.
Thus, we may assume that the release time of each job is $0$, i.e., jobs can start processing immediately at time $0$. 
A summary of our results is provided in \Cref{tab:results}.

For the non-preemptive sUETS problem, we present two algorithms. First, we construct an algorithm with a competitive ratio of $m$ (\Cref{thm:single-np-m}). Second, we design an algorithm that is $\O\big(\sqrt{n\shortslash m}\big)$-competitive (\Cref{thm:single-np-nm}). 
By combining these two algorithms, we obtain an $\O(n^{1/3})$-competitive algorithm (\Cref{cor:single-np-n}).
Moreover, we prove lower bounds showing that every online algorithm for this problem must have a competitive ratio of at least $\Omega(m)$ (\Cref{thm:single-p-hard}) and $\Omega(n^{1/3})$ (\Cref{thm:single-np-hard}).
For the preemptive sUETS problem,  we design an algorithm with a competitive ratio of $\O(\log n/\log\log n)$ (\Cref{thm:single-p-n}), and we demonstrate that the competitive ratios of $\O(m)$ and $\O(\log n/\log\log n)$ are best possible (\Cref{thm:single-p-hard}). 

Turning to the non-preemptive mUETS problem, we first construct an $\O(\sqrt{m})$-competitive algorithm (\Cref{thm:multiple-np}).
By integrating this with the $\O\big(\sqrt{n\shortslash m}\big)$-competitive algorithm for the non-preemptive sUETS problem (\Cref{thm:single-np-nm}), we derive an $\O(n^{1/4})$-competitive algorithm for the non-preemptive mUETS problem (\Cref{cor:multiple-np-n}). We further prove that $\O(\sqrt{m})$ and $\O(n^{1/4})$ are optimal (\Cref{thm:multiple-np-hard}).
Finally, for the preemptive mUETS problem, we establish that the best possible competitive ratios are
$\Theta(\log m/\log\log m)$ and $\Theta(\log n/\log\log n)$ (\Cref{thm:single-p-n,thm:multiple-p,thm:multiple-p-hard}).

%Due to space limitations, some proofs are deferred to \Cref{sec:omitted}.

\begin{table}[t]
    \centering
    \caption{The competitive ratios of the UETS problems with $n$ jobs and $m$ machines.}
    \label{tab:results}
    \renewcommand{\arraystretch}{1.2}
    \tabcolsep = 4mm
    \begin{tabular}{c|cc}
    \toprule
               & single                                             & multiple \\
    \midrule
    \multirow{2}{*}{non-preemptive}& $\Theta(m)$ (Thms.~\ref{thm:single-np-m} and \ref{thm:single-np-hard})& $\Theta(\sqrt{m})$ (Thms.~\ref{thm:multiple-np} and \ref{thm:multiple-np-hard})\\[2pt]

    & $\Theta(n^{1/3})$ (Cor.~\ref{cor:single-np-n} and Thm.~\ref{thm:single-np-hard})& $\Theta(n^{1/4})$ (Cor.~\ref{cor:multiple-np-n} and Thm.~\ref{thm:multiple-np-hard})\\[2pt]\hline
    \multirow{2}{*}{preemptive}    & $\Theta(m)$ (Thms.~\ref{thm:single-np-m} and \ref{thm:single-p-hard})& $\Theta\big(\frac{\log m}{\log\log m}\big)$ (Thms.~\ref{thm:multiple-p} and \ref{thm:multiple-p-hard})\\[2pt]
    & $\Theta\big(\frac{\log n}{\log\log n}\big)$ (Thms.~\ref{thm:single-p-n} and \ref{thm:single-p-hard})&$\Theta\big(\frac{\log n}
{\log\log n}\big)$ (Thms.~\ref{thm:single-p-n} and \ref{thm:multiple-p-hard})\\
    \bottomrule
    \end{tabular}
\end{table}

\subsection{Related work}\label{subsec:relatedwork}

% no set up time
Classical scheduling problems typically assume that the setup time is $0$ for every job batch.
In this case, the well-known \emph{list scheduling algorithm} achieves $(2-1/m)$-competitive, which is proven to be optimal~\cite{Graham1966,Gusfield1984,HS1989,SWW1995}.
This algorithm assigns unprocessed jobs to any available machine in the order they appear on the job list, disregarding execution times.

When execution times are unknown until job completion, the problem falls under the category of \emph{non-clairvoyant scheduling}~\cite{motwani1994nonclairvoyant}. Recent studies have explored non-clairvoyant scheduling in input prediction models~\cite{bampis2023non,im2023non}. Shmoys et al.~\cite{SWW1995} introduced a technique to transform a $\rho$-competitive algorithm for a scheduling problem without release times into a $2\rho$-competitive algorithm for the same problem with release times. For comprehensive overviews of online scheduling, see surveys~\cite{Sgall1998,pruhs2003online} and the book by Pinedo~\cite{Pinedo2008}.

% setup time
Scheduling problems that incorporate both setup times and execution times arise in various applications, such as cloud computing (where virtual machines must be initialized based on job types) and and production systems (where machines require reconfiguration, such as changing molds or colors)~\cite{sahney1972single,MCMQ1999,HNQ2016,GN2000,divakaran2000online,allahverdi2008survey}.
For example, in plastic production systems, attaching a specific mold to a machine constitutes the setup time. If consecutive jobs use the same mold, no additional setup time is required.
The total setup time for a batch can be precomputed as the minimum time needed for attaching, exchanging, and removing molds.
% Generally, this setup time can be represented as follows: (i) the jobs are partitioned into distinct types, (ii) each type is associated with a specific setup time, and (iii) the setup time for a batch is calculated as the sum of the setup times for the types included in the batch. 
% We refer to the setup time defined in this way as the \emph{type-specific setup time}.
% Our model is capable of accommodating this type of setup time.

Gambosi and Nicosia~\cite{GN2000} studied an online version of scheduling with setup times in the one-by-one model.
M\"{a}cker et al.~\cite{macker2017non} investigated non-clairvoyant scheduling with setup times and proposed an $\O(\sqrt{n})$-competitive algorithm for minimizing maximum flow time on a single machine.
Dogeas et al.~\cite{Dogeas+2024} considered a scenario where the execution time of each job is only revealed after an obligatory test with a known duration. This test time can be viewed as a kind of setup time, although it differs in that the execution time becomes apparent.

% Ours
Goko et al.~\cite{goko2022online} introduced 
a scheduling model with a metric state space that involves setup time and unknown execution time settings together. 
Their model includes a scheduling problem faced by repair companies where each worker needs to visit customers' houses, do the repair jobs, and then return to the office. The setup time corresponds to the shortest tour length for the customers' houses, and the execution time corresponds to the duration of the repair jobs.
%Their model also includes the online traveling salesman problem (TSP)~\cite{AF+1995,AF+2001,BK+2001,JW2008,BH+2021} and the online dial-a-ride problem~\cite{AKR2000,Krumke2001,FS2001,LLPSS2004,BLS2006,BDS2019,Birx2020,BD2020}.
Their model also includes the online dial-a-ride problem~\cite{AKR2000,Krumke2001,FS2001,LLPSS2004,BLS2006,BDS2019,Birx2020,BD2020}, in which taxis are offered to pick up and drop passengers for transportation jobs.
Our preemptive mUETS problem can be seen as an abstraction of their model, and hence, a similar approach can be used to solve it. However, the other problems (i.e., preemptive/non-preemptive sUETS and non-preemptive mUETS) are different and require other approaches.

% offline
The setting in which all jobs have an execution time of $0$ and are released at time $0$, with only the setup time considered as processing time, can be viewed as an offline load balancing problem. 
This problem is NP-hard even if the setup time is additive and the number of machines is two, as this special case corresponds to the PARTITION problem~\cite{garey1979computers}.
For the additive setup time, the list scheduling algorithm works as a $(2-1/m)$-approximation algorithm. Moreover, this problem admits
a polynomial time approximation scheme (PTAS)~\cite{HS1987}.
Svitkina and Fleischer~\cite{SvitkinaFleischer2011} presented an $\O(\sqrt{n/\log n})$-approximation algorithm for the submodular setup times, and demonstrated that this is best possible.
Furthermore, Nagano and Kishimoto~\cite{nagano2019subadditiveloadbalancing} provided a $2\cdot (\max_{j\in[m]}\frac{|S_j^*|}{1+(|S_j^*|-1)(1-\kappa_c(S_j^*))})$-approximation algorithm for the subadditive setup times, where $(S_1^*,\dots,S_m^*)$ is an optimal partition and $\kappa_c(S)$ is the curvature of $c$ at $S\subseteq J$. %They also provided some impossibility.

\section{Preliminaries}
For a positive integer $k$, we write $[k]$ to denote the set $\{1,2,\dots,k\}$.
We are given a set $J$ of $n$ jobs and $m$ identical machines.
We assume that all jobs are given and released at time $0$.
As we will show in \Cref{subsec:release}, this assumption only increases a constant factor in the competitive ratios.
We denote the set of machines by $[m]$.
We assume that $n\ge m\ge 2$, as the optimal scheduling is clear otherwise.
The jobs are executed in batches, and each batch's processing time consists of two components: the \emph{setup time} and the \emph{execution time}.

The setup time is the total time one machine takes to set up jobs in the batch.
For a batch $X\subseteq J$, the setup time for $X$ is represented as $c(X) \in \mathbb{R}_+$.
The execution time for each job $j\in J$, denoted by $p_j$, is unknown until its process is completed.
We write $p(X)$ to denote the total execution time of a batch $X\subseteq J$, i.e., $p(X)=\sum_{j\in X}p_j$.
Consequently, the overall processing time of a batch $X$ on a single machine is given by $p(X)+c(X)$.
We assume that the setup time $c(X)$ for every $X\subseteq J$ is available in advance, while the execution time of each job is unknown until its completion.

We also deal with a situation where one batch can be assigned to multiple machines.
We assume that when a batch $X$ is assigned to $k$ machines, each of the machines incurs a setup time $c(X)$, and the jobs are executed on a first-come, first-served basis.
For example, when a batch $X$ is processed in a production system, we assume that each machine forms a queue of the attachments needed for jobs in $X$, and proceeds as follows: (i) dequeues and installs the first attachment in the queue, (ii) repeatedly executes an unprocessed job that is compatible with the current attachment as long as such a job exists, and (iii) removes the current attachment and returns to step (i) if the attachment queue is not empty.
The minimum processing time for batch $X$ across $k$ machines falls within the range of $c(X)+p(X)/k$ and $c(X)+p(X)/k+\max_{j\in X}p_j$.
% One may be aware that setup times can be shortened by omitting unnecessary reassignment of attachments.
% In \Cref{sec:discuss}, we will observe that our algorithms remain asymptotically optimal even with such omission.

At each time, we can assign a batch of jobs that have not yet been completed and are not assigned anywhere else.
In the sUETS and mUETS problems, a batch can be allocated to a single machine and multiple machines, respectively.
Our objective is to minimize the makespan, which is the time at which all the jobs are completed.

We assume that the setup time function $c\colon 2^J\to\mathbb{R}_+$ is \emph{monotone subadditive}. 
Monotonicity means that $c(X)\le c(Y)$ for any $X\subseteq Y\subseteq J$.
Subadditivity means that for any disjoint subsets $X$ and $Y$ from $J$, the sum of their setup times is greater than or equal to the setup time of their union, i.e., $c(X)+c(Y)\ge c(X\cup Y)$.
Note that this assumption of subadditivity does not lose generality.
To observe this, for any (not necessarily subadditive) function $c$, define the function $\bar{c}(X)=\min_{(X_1,\dots,X_\ell):\,\text{partition of $X$}}\sum_{i=1}^\ell c(X_i)$.
This new function $\bar{c}$ is monotone and subadditive by construction---it represents the minimum total setup time achievable if one were allowed to split the batch $X$ into sub-batches and incur the setup cost separately for each. Importantly, replacing $c$ with $\bar{c}$ in our analysis does not change the fundamental nature of the scheduling problem, and thus we assume without loss of generality that the setup time function is monotone and subadditive.

% Here, we consider a partition of batch $X$ into sub-batches $X_1,\dots,X_\ell$, and calculate the total setup time by summing up the individual setup times of each sub-batch.
% By doing so, we employ the minimum total time among all possible partitions.

We explore two settings based on whether an allocated batch can be preempted.
In the non-preemptive setting, once a batch is assigned, it must be processed to completion without interruption.
Conversely, in the preemptive setting, the processing of a batch can be halted and canceled. 
Suppose that a batch $X$ is assigned to $k$ machines and is preempted after $t$ units of time from the start of its processing.
In this case, we assume that the machines will become available to process another batch after at most $c(X)+\max_{j\in X}p_j$ units of time.
This assumption ensures that the processing time does not increase even if preemption occurs during the setup phase.
Furthermore, at the moment of preemption, let $X'\subseteq X$ denote the set of jobs that have been completed. Then, the following inequality holds: $c(X)+p(X')/k+\max_{j\in X}p_j\ge t$.
This condition ensures that the total time spent up to the point of preemption is consistent with the setup and execution times for the completed jobs.

Note that, in any setting, we can assume that an optimal schedule does not perform preemptions and assigns only one batch to each machine at the beginning because the setup time $c$ is subadditive.
Thus, an optimal schedule can be represented as a partition of jobs $(X_1^*,\dots,X_m^*)$ and the optimal makespan is $\max_{i\in[m]} (c(X_i^*)+p(X^*_i))$ independently of the settings.
Based on this observation, we derive a lower bound for the optimal makespan, as formalized in the following lemma. 
\begin{lemma}\label{lem:opt-lower}
The optimal makespan is at least 
\begin{align*}
\max\left\{\min_{(X_1,\dots,X_m):\,\text{partition of $J$}}\ \max_{i\in[m]}\ c(X_i),\ (c(J)+p(J))/m,\ \max_{j\in J}(c(\{j\})+p_j)\right\}.
\end{align*}
\end{lemma}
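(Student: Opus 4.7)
The plan is to start from the structural observation already noted just before the lemma, namely that an optimal schedule can be represented by a partition $(X_1^*,\dots,X_m^*)$ of $J$ with makespan $\OPT = \max_{i\in[m]}(c(X_i^*)+p(X_i^*))$, and then derive each of the three terms inside the outer maximum as a separate lower bound on this quantity. Since the optimal makespan dominates each of the three bounds, it dominates their maximum.

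For the first term, I would argue that $\OPT \ge \max_{i\in[m]} c(X_i^*)$ because execution times are nonnegative. Since $(X_1^*,\dots,X_m^*)$ is some partition of $J$, this is at least $\min_{(X_1,\dots,X_m)}\max_{i\in[m]} c(X_i)$, where the minimum ranges over all partitions of $J$ into $m$ parts (allowing empty parts).

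For the second term, I would use an averaging argument combined with subadditivity of $c$. Specifically,
\begin{align*}
\OPT \;=\; \max_{i\in[m]}(c(X_i^*)+p(X_i^*)) \;\ge\; \frac{1}{m}\sum_{i\in[m]}(c(X_i^*)+p(X_i^*)) \;\ge\; \frac{c(J)+p(J)}{m},
\end{align*}
where the last inequality uses $\sum_{i\in[m]} c(X_i^*) \ge c(J)$ (repeated application of subadditivity on a partition) and $\sum_{i\in[m]} p(X_i^*) = p(J)$ (since $p$ is additive and $(X_i^*)$ is a partition).

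For the third term, I would fix an arbitrary job $j\in J$ and let $i(j)$ be the index of the part containing $j$. By monotonicity of $c$, $c(X_{i(j)}^*) \ge c(\{j\})$, and clearly $p(X_{i(j)}^*) \ge p_j$, so $\OPT \ge c(X_{i(j)}^*)+p(X_{i(j)}^*) \ge c(\{j\})+p_j$; taking the maximum over $j$ yields the bound. There is no real obstacle here — the only subtle point is using subadditivity in the correct direction for the averaging bound, and noticing that monotonicity (not just subadditivity) is what gives the per-job bound.
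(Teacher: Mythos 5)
Your proof is correct and follows essentially the same route as the paper: start from the optimal partition $(X_1^*,\dots,X_m^*)$, bound the first term by dropping execution times and minimizing over partitions, the second by averaging plus subadditivity, and the third by monotonicity. The only (cosmetic) difference is that the paper chains the second and third bounds into a single inequality string, whereas you derive them independently, which is if anything slightly cleaner.
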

\begin{proof}
Let $(X_1^*,\dots,X_m^*)$ be an optimal schedule.
Then, the optimal makespan is at least
\begin{align*}
\max_{i\in[m]} \left(c(X_i^*)+p(X^*_i)\right)
\ge \max_{i\in[m]} c(X_i^*)
\ge \min_{(X_1,\dots,X_m):\,\text{partition of $J$}}\ \max_{i\in[m]}\ c(X_i).
\end{align*}
Additionally, we have
\begin{align*}
\max_{i\in[m]} \left(c(X_i^*)+p(X^*_i)\right)
\ge \max_{j\in J}\left(c(\{j\})+p_j\right)
\ge (c(J)+p(J))/m
\end{align*}
by monotone subadditivity of the setup time.
Therefore, we have the lower bound.
\end{proof}

\subsection{Examples of setup times}\label{sec:setup times}
This subsection illustrates several examples of setup times raised in practical applications.
In addition, we discuss the approximability of the partition (load balancing) problem
\begin{align}
\min_{(X_1,\dots,X_m):\,\text{partition of $J$}}\  \max_{i\in[m]}\ c(X_i) \label{prob:partition}
\end{align}
for each class of setup times.

\medskip
\noindent\textbf{Constant setup times}\quad 
One of the simplest examples of the setup time family is the constant setup time, where the setup time is a certain constant independent of the batch. 
Formally, the setup time is represented as $c(X)=1$ for $X\in 2^J\setminus\{\emptyset\}$ and $c(\emptyset)=0$.
It is not difficult to see that any partition of the jobs $(X_1,\dots,X_m)$ is optimal for the problem \eqref{prob:partition}.

\medskip
\noindent\textbf{Type-specific setup times}\quad 
In production systems with attachments, jobs are partitioned based on their types. 
Let $T$ be the set of types and $(J_t)_{t\in T}$ is the partition of jobs with types $T$, i.e., $\bigcup_{t\in T}J_t=J$ and $J_t\cap J_{t'}=\emptyset$ for all distinct $t,t'\in T$. 
Let $w_t$ be the setup time for type $t$.
Then, the setup time for $X\subseteq J$ is defined as $c(X)=\sum_{t\in T:\,X\cap J_t\ne\emptyset}w_t$.
Additionally, we say that a type-specific setup time $c$ is unweighted if $w_t=1$ for all $t\in T$, i.e., $c(X)=|\{t\in T\mid X\cap J_t\ne\emptyset\}|$ for $X \subseteq J$.
% The unweighted setup time for $X\subseteq J$ is $c(X)=|\{t\in T\mid X\cap J_t\ne\emptyset\}|$.
% We call this unweighted type-specific setup times.
% 
% Generally, this setup time can be represented as follows: (i) the jobs are partitioned into distinct types, (ii) each type is associated with a specific setup time, and (iii) the setup time for a batch is calculated as the sum of the setup times for the types included in the batch. 
%
We can obtain a PTAS for the problem \eqref{prob:partition} with type-specific setup times by treating each $J_t$ as a single job and applying a PTAS for the load balancing problem~\cite{HS1987}. Moreover, for the unweighted case, the problem \eqref{prob:partition} is solvable in polynomial time because an optimal solution can be obtained by simply balancing the number of types assigned to each machine. 

\medskip
\noindent\textbf{Library-based setup times}\quad
In cloud computing environments, where each job requires the installation of multiple libraries, and each library takes a fixed amount of time to install. The setup time needed to process a batch of jobs is the total installation time of the required libraries.
Let $L$ be the set of libraries, and let $L_j\subseteq L$ be the set of required libraries for job $j\in J$.
Let $w_\ell$ be the installation time for library $\ell\in L$.
Then, the setup time for $X\subseteq J$ is defined as $c(X)=\sum_{\ell\in \bigcup_{j\in X}L_j}w_\ell$.
Note that if every job $j\in J$ requires a unique library (i.e., $|L_j|=1$), this is a type-specific setup time.
This setup time is a monotone submodular function; more specifically, it is a weighted coverage function.
Hence, we can apply the $\O(\sqrt{n\shortslash\log n})$-approximation algorithm for the uniform submodular load balancing problem~\cite{SvitkinaFleischer2011} to solve the problem \eqref{prob:partition}.

\medskip
\noindent\textbf{TSP-based setup times}\quad
For applications like repair companies, the setup time is defined by the optimal value of a traveling salesman problem (TSP) instance. 
Let $(V,d)$ be a metric space with an origin $o\in V$.
The origin corresponds to the location of the company.
Each job $j\in J$ is associated with a point $v_j\in V$.
Then, the setup time for $X\subseteq J$ is defined as the optimal value of the TSP on $V'=\{v_j\mid j\in X\}\cup\{o\}$.
Note that this class of setup times is a generalization of type-specific setup times.
Indeed, for a star metric $(T\cup\{o\},d)$ where $d(t,o)=w_t/2$ for all $t\in T$ and $d(t,t')=(w_t+w_{t'})/2$ for all distinct $t,t'\in T$, the setup time is $c(X)=\sum_{t\in T:\,X\cap J_t\ne\emptyset}w_t$ for $X\subseteq J$ when $v_j=t$ for each job $j\in J_t$.
For TSP-based setup times, the problem \eqref{prob:partition} can be viewed as the $m$-TSP.
It is known that the $m$-TSP admits a $2.5$-approximation algorithm~\cite{FHK1978,christofides2022worst}.

\subsection{Reduction of the Problem with Release Time}\label{subsec:release}
In this subsection, we show that our assumption that all jobs have a release time of $0$ (i.e., each job is available for processing from time $0$) does not lose generality.
Suppose instead that each job has a release time, and its existence is hidden until its release time.
Let $\ALG$ be a $\rho$-competitive algorithm for a UETS problem in a certain setting without release times.
Then, we demonstrate that $\ALG$ can be transformed into a $(2\rho+1)$-competitive algorithm for the UETS problem with release times of the corresponding setting.
For this end, we use the \emph{IGNORE} strategy, which appeared in the paper of Shmoys et al.~\cite{SWW1995} and is named by Ascheuer et al.~\cite{AKR2000}.

The IGNORE strategy keeps the machines remain idle until a set $S$ of jobs appears.
Then, IGNORE immediately decides a schedule for the jobs in $S$ following the algorithm $\ALG$ and assigns job batches to machines.
We refer to this schedule as a subschedule for $S$.
All jobs that arrive during the process of the subschedule are temporarily ignored until the subschedule for $S$ is completed.
After all the machines complete the jobs in $S$, IGNORE decides a subschedule for all new jobs that arrived during the previous process. If there are no such jobs, all the machines become idle.
The IGNORE strategy repeats this procedure.
\begin{theorem}\label{thm:ignore}
The $\IGNORE$ strategy with a $\rho$-competitive algorithm $\ALG$ is $(2\rho+1)$-competitive for the UETS problem with release time.
\end{theorem}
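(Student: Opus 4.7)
\medskip\noindent\textbf{Proof proposal.}
The plan is to partition the execution of $\IGNORE$ into consecutive phases $1,2,\dots,K$, where phase $k$ processes the job set $S_k$ (the jobs that arrived during phase $k-1$, or at the start for $k=1$) via the algorithm $\ALG$. Let $t_k$ denote the start time of phase $k$ and $T_k$ its duration, so that the makespan of $\IGNORE$ equals $t_K+T_K$. Let $T^{*}$ denote the optimal offline makespan of the instance with release times. The goal is to bound $t_K+T_K\le(2\rho+1)T^{*}$.

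First I would establish the per-phase bound $T_k\le\rho T^{*}$. Since $\ALG$ is $\rho$-competitive, $T_k\le\rho\cdot\OPT(S_k)$, where $\OPT(S_k)$ is the optimal makespan for the jobs in $S_k$ assuming they are all released at time $0$. To get $\OPT(S_k)\le T^{*}$, take an optimal offline schedule $(X_1^{*},\dots,X_m^{*})$ of the full instance and restrict each machine's workload to $X_i^{*}\cap S_k$; by monotonicity of $c$, the setup time on machine $i$ is at most $c(X_i^{*})$, and the execution time is at most $p(X_i^{*})$, so the restricted schedule has makespan at most $T^{*}$ even without release times.

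Next I would use the observation that every release time is at most $T^{*}$ (because any job released at time $r$ cannot finish before $r$ in the optimal schedule). Consider what $\IGNORE$ is doing at time $T^{*}$:
\begin{itemize}
\item If no phase is active at time $T^{*}$, then by definition all currently released jobs have been processed; since all jobs are released by $T^{*}$, $\IGNORE$ has already finished, giving makespan $\le T^{*}$.
\item Otherwise $T^{*}$ falls within some phase $k^{*}$, so $t_{k^{*}}\le T^{*}<t_{k^{*}}+T_{k^{*}}$. At the moment this phase ends, all jobs have been released, so any unprocessed jobs are placed into $S_{k^{*}+1}$ and processed in the immediately following (and final) phase with no intervening idle time. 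Hence $K\le k^{*}+1$.
\end{itemize}
Combining these bounds, $t_K+T_K\le(t_{k^{*}}+T_{k^{*}})+T_{k^{*}+1}\le T^{*}+\rho T^{*}+\rho T^{*}=(2\rho+1)T^{*}$, which is the desired inequality.

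The main obstacle I anticipate is cleanly justifying that after time $T^{*}$ at most one additional phase can occur, and in particular that no idle interval can insert itself between phases $k^{*}$ and $k^{*}+1$; this needs a careful case split on whether $\IGNORE$ is busy, idle with pending releases, or already finished at time $T^{*}$. The inequality $\OPT(S_k)\le T^{*}$ via restriction of the offline optimum, which leans on the monotonicity of the setup function $c$, is the other ingredient that must be stated explicitly rather than taken for granted.
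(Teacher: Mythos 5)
Your proposal is correct and follows essentially the same argument as the paper: bound each subschedule's length by $\rho\cdot\OPT$ via monotonicity of the optimal makespan in the job set, observe that the pivot time (you use $T^{*}$, the paper uses the last release time $t^{*}\le T^{*}$) falls in at most the second-to-last subschedule, and sum three terms to get $(2\rho+1)T^{*}$. The edge cases you flag at the end (a job released exactly at the pivot, or the pivot landing between phases) only cost $(\rho+1)T^{*}$ or $(2\rho+1)T^{*}$ respectively, so the bound goes through.
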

\begin{proof}
Fix an instance, and let $\OPT$ be the optimal makespan.
For a subset of jobs $J'$, we denote $\ALG(J')$ as the makespan of the algorithm, assuming that jobs in $J'$ are given at time $0$.
Note that $\ALG(J')\le \rho\cdot\OPT$ by the monotonicity of the optimal makespan with respect to the jobs.

If the instance contains no job, then $\IGNORE$ is clearly optimal. Thus, we assume that the instance contains at least one job. 
Let $t^*$ be the last release time of the jobs. Then, $\OPT\ge t^*$ as the last released job can only be processed from $t^*$.

Suppose that the machines are idle at time $t^*$.
Let $R$ be the set of jobs processed in the last subschedule by the IGNORE strategy.
Then, the makespan of IGNORE is
\begin{align*}
t^*+\ALG(R)\le \OPT+\rho\cdot\OPT
\le (2\rho+1)\cdot\OPT.
\end{align*}

On the other hand, suppose that some machines are not idle at time $t^*$.
Then $t^*$ is in the second last subschedule of the algorithm, and the last subschedule starts right after the second-to-last subschedule ends.
Let $R$ and $S$ be the sets of jobs processed in the last and the second-to-last subschedules, respectively.
Then, the makespan of IGNORE is
\begin{align*}
t^*+\ALG(S)+\ALG(R)
\le \OPT+\rho\cdot\OPT+\rho\cdot\OPT=(2\rho+1)\cdot\OPT.
\end{align*}

Therefore, the competitive ratio of IGNORE is at most $2\rho+1$.
\end{proof}

It should be noted that this reduction remains valid even when the competitive ratio $\rho$ of $\ALG$ depends on the number of machines $m$ or the number of jobs $n$, provided it is monotonically nondecreasing with respect to $n$.
From this theorem, the optimum competitive ratio is the same up to a constant factor regardless of the release time.

\section{Single Machine Batch Allocation}\label{subsec:single}
In this section, we analyze the sUETS problem where each batch can be allocated only on a single machine.
Recall that $n$ is the number of jobs and $m$ is the number of machines.

\subsection{Algorithms}\label{subsec:single-alg}
We first observe that the algorithm of processing the batch consisting of all jobs by a single machine without preemption is $m$-competitive, where $m$ is the number of machines.
Note that this algorithm is feasible for any settings of sUETS and mUETS since no cancellation is performed.
\begin{theorem}\label{thm:single-np-m}
The algorithm of processing the batch consisting of all jobs by a single machine is $m$-competitive for the non-preemptive sUETS problem.
\end{theorem}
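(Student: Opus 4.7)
The plan is to simply read off the makespan of the proposed algorithm and compare it to the lower bound on $\OPT$ already established in \Cref{lem:opt-lower}.

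First I would observe that the algorithm places the single batch $J$ on one machine and, since this is the non-preemptive setting, runs it to completion. By definition of the processing time of a batch on a single machine, the makespan of the algorithm is exactly
\begin{align*}
\ALG = c(J) + p(J).
\end{align*}
Note this quantity does not depend on the unknown execution times being revealed in any particular order, so no online issues arise.

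Next I would invoke \Cref{lem:opt-lower}, which gives $\OPT \ge (c(J)+p(J))/m$. Combining this with the expression for $\ALG$ yields $\ALG = c(J)+p(J) \le m \cdot \OPT$, which is the desired $m$-competitiveness.

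There is essentially no obstacle: the entire argument is one line once the lower bound lemma is in hand. The only thing worth remarking on is feasibility, namely that the algorithm is valid for the non-preemptive sUETS setting because it assigns a single batch to a single machine and performs no cancellation, so subadditivity of $c$ and the monotonicity conditions are trivially respected.
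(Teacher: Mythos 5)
Your proposal is correct and follows the same argument as the paper: the makespan is exactly $c(J)+p(J)$, and the bound $\OPT\ge (c(J)+p(J))/m$ from \Cref{lem:opt-lower} (itself a consequence of subadditivity of $c$) immediately gives $m$-competitiveness. No gaps.
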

\begin{proof}
As the makespan of the algorithm is $c(J)+p(J)$ and the optimal makespan is at least $(C(J)+p(J))/m$ by \Cref{lem:opt-lower}, the competitive ratio is at most
\begin{align*}
\frac{c(J)+p(J)}{\max_{i\in[m]}(c(X_i^*)+p(X_i^*))}
&\le \frac{c(J)+p(J)}{\sum_{i\in[m]}(c(X_i^*)+p(X_i^*))/m}
\le \frac{c(J)+p(J)}{(c(J)+p(J))/m}=m,
\end{align*}
where the second inequality is implied by the subadditivity of the setup time $c$.
\end{proof}

Second, we analyze the competitive ratio of the list scheduling algorithm.
\begin{theorem}\label{thm:listscheduling}
The algorithm that greedily assigns a batch consisting of a single unprocessed job to any available single machine is $\O(n/m)$-competitive for the non-preemptive sUETS problem.
\end{theorem}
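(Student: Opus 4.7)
The plan is to view this list-scheduling algorithm as Graham's classical algorithm applied to $n$ independent tasks, where task $j$ has combined processing time $t_j := c(\{j\}) + p_j$, and then to bound both terms in the resulting Graham-style inequality by $\O(n/m)\cdot\OPT$ using \Cref{lem:opt-lower}.

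First I would set $W := \sum_{j\in J} t_j$ and $P := \max_{j\in J} t_j$ and apply the standard Graham argument: the machine $i$ that determines the makespan was busy continuously up until its last job started (since list scheduling never leaves a machine idle while some job is unassigned), so every other machine was busy throughout this interval as well. This yields the familiar bound
\begin{align*}
\ALG \;\le\; \frac{W}{m} + P.
\end{align*}

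Next I would bound each term using \Cref{lem:opt-lower}. That lemma directly gives $P \le \OPT$ and $p(J) \le c(J)+p(J) \le m\cdot\OPT$. The only remaining quantity is $\sum_{j\in J} c(\{j\})$, which I would bound crudely by $n\cdot\max_{j\in J} c(\{j\}) \le n\cdot P \le n\cdot\OPT$. Combining, $W = \sum_{j\in J} c(\{j\}) + p(J) \le (n+m)\cdot\OPT$, and therefore
\begin{align*}
\ALG \;\le\; \frac{W}{m} + P \;\le\; \Bigl(\frac{n}{m}+2\Bigr)\cdot\OPT,
\end{align*}
which is $\O(n/m)\cdot\OPT$ since $n \ge m$.

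Rather than a real obstacle, the main point to notice is that subadditivity of $c$ runs the ``wrong way'' for this argument: subadditivity gives $c(J) \le \sum_{j\in J} c(\{j\})$, so one cannot replace $\sum_{j\in J} c(\{j\})$ by $c(J)$ to save a factor of $n$. Losing this factor of $n$ on the aggregate setup contribution is exactly what forces the $n/m$ competitive ratio, and no finer accounting on the setup terms can improve this particular algorithm.
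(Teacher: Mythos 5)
Your proof is correct and follows essentially the same route as the paper: the Graham-style bound $\ALG \le \frac{1}{m}\sum_{j\in J}(c(\{j\})+p_j)+\max_{j\in J}(c(\{j\})+p_j)$, with each term controlled via \Cref{lem:opt-lower}. The paper simply bounds the entire sum by $n\cdot\OPT$ (since each summand $c(\{j\})+p_j\le\OPT$), whereas you split off $p(J)\le m\cdot\OPT$ separately; the difference is cosmetic.
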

\begin{proof}
The makespan of the algorithm is at most 
\begin{align*}
\frac{1}{m}\sum_{j\in J}(c(\{j\})+p_j)+\max_{j\in J}(c(\{j\})+p_j)
\le \frac{n\cdot \OPT}{m}+\OPT=\O(n/m)\cdot\OPT,
\end{align*}
where the inequality holds by \Cref{lem:opt-lower}.
Thus, the competitive ratio is at most $\O(n/m)$.
\end{proof}

Next, we provide an $\O\big(\sqrt{n\shortslash m}\big)$-competitive algorithm, which is an improved version of the list scheduling algorithm.
The idea is to reduce the total setup time by grouping jobs.
The algorithm first divides the jobs into $m+\big\lceil\sqrt{mn}\big\rceil$ batches in such a way that this partition minimizes the maximum setup time, under the condition that each batch contains at most $\big\lceil\sqrt{n\shortslash m}\big\rceil$ jobs.
Then, it processes the batches according to the list scheduling algorithm, which greedily allocates an unprocessed batch to any available machine.
The formal description of this algorithm is summarized in~\Cref{alg:single-np}.

\begin{algorithm}[ht]
  Let $k=m+\big\lceil\sqrt{mn}\big\rceil$\;
  Compute a $k$-partition $(X_1,\dots,X_k)$ of the jobs $J$ such that $|X_i|\le \big\lceil\sqrt{n\shortslash m}\big\rceil$ for all $i\in[k]$, and among such partitions, minimize $\max_{i\in[k]} c(X_i)$\;\label{line:size_limited_partition}
  \For{$i\ot 1,2,\dots, k$}{
    Wait until at least one machine is available\;
    Assign batch $X_i$ to an available machine\;
  }
  \caption{$\O(\sqrt{n\shortslash m})$-competitive algorithm for the non-preemptive sUETS problem}\label{alg:single-np}
\end{algorithm}

\begin{theorem}\label{thm:single-np-nm}
\Cref{alg:single-np} is $\O\big(\sqrt{n\shortslash m}\big)$-competitive for the non-preemptive sUETS problem.
\end{theorem}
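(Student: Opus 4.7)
The plan is to treat the $k$ batches $X_1,\dots,X_k$ produced in Line~\ref{line:size_limited_partition} as indivisible \emph{super-jobs} with processing times $c(X_i)+p(X_i)$, and apply the textbook list-scheduling analysis. If the last machine finishes at time $T$ with final super-job $X_\ell$, then just before $X_\ell$ starts every machine must be busy (otherwise $X_\ell$ would have started earlier), so the work done by time $T-(c(X_\ell)+p(X_\ell))$ is at most $\sum_{i\ne\ell}(c(X_i)+p(X_i))$. Rearranging yields the standard bound
\begin{align*}
T \;\le\; \frac{1}{m}\sum_{i=1}^k \bigl(c(X_i)+p(X_i)\bigr) \;+\; \max_{i\in[k]}\bigl(c(X_i)+p(X_i)\bigr).
\end{align*}
From here I would bound each of $\sum_i c(X_i)$, $\sum_i p(X_i)$, $\max_i c(X_i)$, and $\max_i p(X_i)$ against $\OPT$ separately.

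The crux of the argument, and the main obstacle, is to show $\max_{i\in[k]} c(X_i) \le \OPT$. Because Line~\ref{line:size_limited_partition} picks a $k$-partition minimizing $\max_i c(X_i)$ among all partitions whose parts have size at most $\lceil\sqrt{n\shortslash m}\rceil$, it suffices to exhibit \emph{one} such feasible partition of value $\le \OPT$. Take an optimal offline partition $(X_1^*,\dots,X_m^*)$ and refine each $X_i^*$ arbitrarily into chunks of size at most $\lceil\sqrt{n\shortslash m}\rceil$. By monotonicity of $c$, each chunk $Y\subseteq X_i^*$ satisfies $c(Y)\le c(X_i^*)\le\OPT$. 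The total number of chunks is at most
\begin{align*}
\sum_{i=1}^{m}\left\lceil\frac{|X_i^*|}{\lceil\sqrt{n\shortslash m}\rceil}\right\rceil \;\le\; m + \frac{n}{\lceil\sqrt{n\shortslash m}\rceil} \;\le\; m + \sqrt{mn} \;\le\; k,
\end{align*}
so padding with empty batches yields a feasible $k$-partition certifying $\max_i c(X_i)\le\OPT$. This is precisely where the choice $k = m+\lceil\sqrt{mn}\rceil$ is calibrated.

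The remaining bounds follow from \Cref{lem:opt-lower}. First, $\sum_i p(X_i)=p(J)\le m\cdot\OPT$. Second, $\sum_i c(X_i) \le k\cdot\max_i c(X_i) \le k\cdot\OPT = (m+\lceil\sqrt{mn}\rceil)\cdot\OPT$. Third, since each individual job satisfies $c(\{j\})+p_j \le \OPT$, we have $\max_i p(X_i) \le \lceil\sqrt{n\shortslash m}\rceil \cdot \max_{j\in J}p_j \le \lceil\sqrt{n\shortslash m}\rceil\cdot\OPT$. Substituting into the list-scheduling inequality and using $n\ge m\ge 2$ (so that $\sqrt{n\shortslash m}\ge 1$) to absorb additive constants gives
\begin{align*}
T \;\le\; \Bigl(1+\tfrac{\lceil\sqrt{mn}\rceil}{m}\Bigr)\OPT + \OPT + \lceil\sqrt{n\shortslash m}\rceil\cdot\OPT \;=\; \O\!\bigl(\sqrt{n\shortslash m}\bigr)\cdot\OPT,
\end{align*}
which establishes the claimed competitive ratio.
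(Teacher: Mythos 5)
Your proposal is correct and follows essentially the same route as the paper's proof: the Graham list-scheduling bound applied to the $k$ batches, the refinement of an optimal partition into size-$\lceil\sqrt{n\shortslash m}\rceil$ chunks (with the count calibrated to $k=m+\lceil\sqrt{mn}\rceil$) to certify $\max_i c(X_i)\le\OPT$, and \Cref{lem:opt-lower} for the remaining terms. The only blemish is that your final display drops one additive $\OPT$ term (from $p(J)/m\le\OPT$ versus $\max_i c(X_i)\le\OPT$), which does not affect the $\O\big(\sqrt{n\shortslash m}\big)$ conclusion.
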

\begin{proof}
Let $(X_1^*,\dots,X_m^*)$ be the optimal schedule and let $\OPT$ denote the optimal makespan $\max_{i\in[m]}(c(X_i^*)+p(X_i^*))$.
Define $k=m+\big\lceil\sqrt{mn}\big\rceil$, and let $(X_1,\dots,X_k)$ be a $k$-partition of the jobs $J$ that minimizes  $\max_{i\in[k]} c(X_i)$ subject to $|X_i|\le\big\lceil\sqrt{n\shortslash m}\big\rceil$ for all $i\in[k]$.

We first observe that $\max_{i\in[k]}c(X_i)\le \OPT$.
This is because by dividing each $X_i^*$ with $i\in[m]$ into $\big\lceil |X_i^*|/\sqrt{n\shortslash m}\big\rceil$ sub-batches of almost equal size, we can obtain a partition $(Y_1,\dots,Y_k)$ that refines $(X_1^*,\dots,X_m^*)$.
Here, we have $|Y_{i'}|\le \max_{i\in[m]}\Big\lceil|X_i^*|\big/\big\lceil|X_i^*|/\sqrt{n\shortslash m}\big\rceil \Big\rceil\le \big\lceil\sqrt{n\shortslash m}\big\rceil$ for all $i'\in[k]$.
Thus, it holds that 
$\max_{i\in[k]}c(X_i)
\le \max_{i\in[k]}c(Y_i)
\le \max_{i\in[m]}c(X^*_i)
\le \OPT$.

Then, the makespan of the algorithm is at most 
\begin{align*}
\MoveEqLeft
\frac{1}{m}\sum_{i\in[k]}(c(X_i)+p(X_i))+\max_{i\in[k]}(c(X_i)+p(X_i))\\
&\le \frac{k\cdot \OPT+p(J)}{m}+\max_{i\in[k]}|X_i|\cdot\max_{j\in J}(c(\{j\})+p_j)\\
&\le \Big(k/m+1+\big\lceil\sqrt{n\shortslash m}\big\rceil\Big)\cdot\OPT
=\O\big(\sqrt{n\shortslash m}\big)\cdot\OPT,
\end{align*}
where the first inequality follows from the subadditivity of the setup time and the second inequality from \Cref{lem:opt-lower}.
Hence, \Cref{alg:single-np} is $\O\big(\sqrt{n\shortslash m}\big)$-competitive.
\end{proof}

By combining \Cref{thm:single-np-m,thm:single-np-nm}, we obtain the following corollary.
\begin{corollary}\label{cor:single-np-n}
There exists an $\O(n^{1/3})$-competitive algorithm for the non-preemptive sUETS problem.
\end{corollary}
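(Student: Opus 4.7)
The plan is to combine the two algorithms already established for the non-preemptive sUETS problem, namely the $m$-competitive ``one big batch on one machine'' algorithm of \Cref{thm:single-np-m} and the $\O\big(\sqrt{n\shortslash m}\big)$-competitive \Cref{alg:single-np} of \Cref{thm:single-np-nm}. The first guarantee is good precisely when $m$ is small and the second is good precisely when $m$ is large, so the natural strategy is to branch between them at an appropriate threshold.

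First, I would locate the threshold by balancing the two bounds. Since $m$ is monotone increasing in $m$ while $\sqrt{n\shortslash m}$ is monotone decreasing in $m$, the two ratios coincide when $m = \sqrt{n\shortslash m}$, that is $m^3 = n$, yielding the threshold $m = n^{1/3}$. Both competitive ratios evaluate to $n^{1/3}$ at this point.

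Next, I would define the hybrid algorithm to branch deterministically on whether $m \le n^{1/3}$; this decision can be made at time $0$ since both $m$ and $n$ are part of the input. In the case $m \le n^{1/3}$, invoking the algorithm of \Cref{thm:single-np-m} yields a competitive ratio of at most $m \le n^{1/3}$. In the complementary case $m > n^{1/3}$, invoking \Cref{alg:single-np} yields a competitive ratio of $\O\big(\sqrt{n\shortslash m}\big) \le \O\big(\sqrt{n\shortslash n^{1/3}}\big) = \O(n^{1/3})$. Taking the maximum of the two bounds gives the desired $\O(n^{1/3})$ guarantee.

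There is no real technical obstacle here: the two ingredient theorems do all the heavy lifting, and the only step is the elementary crossover computation. The one point I would double-check when writing it out is that the $\O(\cdot)$ constant hidden in \Cref{thm:single-np-nm} is the same in both regimes, so that the combined bound is genuinely $\O(n^{1/3})$ rather than depending subtly on $m$ through that hidden constant; this is immediate from the proof of \Cref{thm:single-np-nm}.
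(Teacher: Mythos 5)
Your proposal is correct and matches the paper's proof essentially verbatim: both branch on whether $m\le n^{1/3}$, applying \Cref{thm:single-np-m} in the first case and \Cref{alg:single-np} (via \Cref{thm:single-np-nm}) in the second, with the crossover determined by balancing $m$ against $\sqrt{n\shortslash m}$. Nothing further is needed.
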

\begin{proof}
If $m\le n^{1/3}$, then the algorithm that processes the batch $J$ by a single machine is $m=\O(n^{1/3})$-competitive by \Cref{thm:single-np-m}.
On the other hand, if $m> n^{1/3}$, the competitive ratio of \Cref{alg:single-np} is $\O\big(\sqrt{n\shortslash m}\big)=\O(n^{1/3})$ by \Cref{thm:single-np-nm}.
Thus, an $\O(n^{1/3})$-competitive algorithm exists for either case.
\end{proof}

\begin{remark}\label{rem:sUETS}
Obtaining the partition at line~\ref{line:size_limited_partition} in Algorithm~\ref{alg:single-np} is computationally hard.
Thus, we mention the competitive ratio when we can utilize an $\alpha$-approximation algorithm for the problem~\eqref{prob:partition}.
Let $(X'_1,\dots,X'_m)$ be an $m$-partition of the jobs $J$ that is obtained by the approximation algorithm.
Let $k'=m+\big\lceil\sqrt{mn\shortslash\alpha}\big\rceil$.
Then, we can easily construct a $k'$-partition $(Y'_1,\dots,Y'_{k'})$ that is a refinement of $(X'_1,\dots,X'_m)$ and satisfies $|Y'_i|\le \big\lceil\sqrt{\alpha\cdot n\shortslash m}\big\rceil$ for all $i\in [k']$.
By using $k'$ instead of $k$ and employing this partition at line~\ref{line:size_limited_partition}, 
\Cref{alg:single-np} is $\O\big(\alpha+\sqrt{\alpha\cdot n\shortslash m}\big)$-competitive for the non-preemptive sUETS problem.
By combining this with \Cref{thm:listscheduling}, we can obtain an $\O(\min\{\sqrt{\alpha\cdot n\shortslash m},\,n/m\})$-competitive algorithm.
Moreover, by combining this with \Cref{thm:single-np-m}, we can also obtain an $\O(\min\{(\alpha n)^{1/3},\sqrt{n}\})$-competitive algorithm.
\end{remark}

In the rest of this subsection, we provide an $\O(\log n/\log\log n)$-competitive algorithm for the preemptive sUETS problem. 
For a given instance of the sUETS problem, let $q$ be an integer such that $q^q\ge n>(q-1)^{q-1}$.
Note that $q=\Theta(\log n/\log\log n)$.
The execution of our algorithm consists of $q+1$ phases.
Without loss of generality, we may assume that $m\ge 2q$, since otherwise the competitive ratio of the algorithm in \Cref{thm:single-np-m} is $m=\O(q)=\O(\log n/\log\log n)$.

The intuition of our algorithm is as follows.
Our algorithm repeatedly completes a $(1-1/q)$-fraction of the jobs in each phase. Consequently, all the jobs can be processed in $\O(q)$ iterations. This increases the cost of the setup time by a factor of $\O(q)$.
Additionally, we ensure that at least $m/q$ machines work at any point of the algorithm.
This guarantees that the cost in terms of execution time will only increase by a factor of at most $q$.
By these properties, the competitive ratio of the algorithm 
is $\O(q)=\O(\log n/\log\log n)$.

Let us explain our algorithm more precisely.
Initially, it computes an $m$-partition $(X_1,\dots,X_m)$ of the jobs $J$ that minimizes the maximum setup time $\max_{i\in[m]}c(X_i)$.
In the first phase, $X_i$ is allocated to machine $i$ for each $i\in[m]$.
Then, each machine processes the assigned batch until either the machine completes the batch or the number of uncompleted machines becomes less than or equal to $\lfloor m/q\rfloor$.
At that time, the machines still processing will preempt their batches. 
In the $k$th phase ($k=2,3,\dots,q$), the algorithm computes an $m$-partition $(Y_1^{(k)},\dots,Y_m^{(k)})$ of the remaining jobs such that, for any $i\in[m]$, $|Y_i^{(k)}|\le n/q^{k-1}$ and $Y_i^{(k)}\subseteq X_j$ for some $j\in[m]$.
We can always obtain such a partition by dividing each batch left in the previous phase into $q$ sub-batches.
Thereafter, each machine processes the assigned batch until either the batch is completed or the number of uncompleted machines becomes less than or equal to $\lfloor m/q\rfloor$.
At the end of $q$th phase, the number of remaining jobs is at most $(n/q^{q-1})\cdot\lfloor m/q\rfloor\le m\cdot(n/q^q)\le m$.
In the $(q+1)$st phase, the algorithm assigns up to one remaining job to each machine.
A formal description of the algorithm is provided in \Cref{alg:single-p}.

\begin{algorithm}[ht]
  Let $q$ be an integer such that $q^q\ge n>(q-1)^{q-1}$\;
  \lIf{$m<2q$}{Assign batch $J$ to one machine and exit}
  Compute an $m$-partition $(X_1,\dots,X_m)$ of the jobs $J$ that minimizes $\max_{i\in[m]} c(X_i)$\;
  Let $R^{(0)}\ot J$ be the set of uncompleted jobs\;
  \For(\tcp*[h]{phase $k$}){$k\ot 1,2,\dots, q$}{
    Compute an $m$-partition of $(Y_1^{(k)},\dots,Y_m^{(k)})$ of the remaining jobs $R^{(k-1)}$ such that, for each $i\in[m]$, $|Y_i^{(k)}|\le n/q^{k-1}$ and $Y_i^{(k)}\subseteq X_j$ for some $j\in[m]$\;\label{line:partition}
    Assign batch $Y_i^{(k)}$ to machine $i$ for each $i\in[m]$\;
    Each machine processes the assigned batch until the batch is completed or the number of uncompleted machines becomes less than or equal to $\lfloor m/q\rfloor$\;\label{line:wait}
    Let $R^{(k)}$ be the set of uncompleted jobs\;
  }
  Assign up to one job $R^{(q)}$ per machine\tcp*[l]{phase $q+1$}\label{line:last}
  \caption{$\O\big(\frac{\log n}{\log\log n}\big)$-competitive algorithm for the preemptive sUETS problem}\label{alg:single-p}
\end{algorithm}

\begin{theorem}\label{thm:single-p-n}
\Cref{alg:single-p} is $\O\big(\frac{\log n}{\log\log n}\big)$-competitive for the preemptive sUETS problem.
\end{theorem}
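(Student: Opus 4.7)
The plan is to decompose each phase's duration into a setup contribution and an execution contribution, sum both across the $q+1$ phases of \Cref{alg:single-p}, and invoke a telescoping identity in the execution contribution to avoid a quadratic blow-up.

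First, I would dispose of the degenerate case $m<2q$: by \Cref{thm:single-np-m} the trivial single-batch algorithm is already $m=O(q)=O(\log n/\log\log n)$-competitive, so one may assume $m\ge 2q$. Next, I would establish the structural facts used in every phase. Let $(X_1^*,\dots,X_m^*)$ be an optimal partition with makespan $\OPT$. Since the algorithm's initial partition $(X_1,\dots,X_m)$ minimizes $\max_i c(X_i)$ and the optimum witnesses $\max_i c(X_i^*)\le\OPT$, we have $\max_i c(X_i)\le\OPT$; by monotonicity, every sub-batch $Y\subseteq X_j$ satisfies $c(Y)\le\OPT$. The partition required at line~\ref{line:partition} exists because at the start of phase $k$ the surviving sub-batches come from at most $\lfloor m/q\rfloor$ preempted machines of phase $k-1$, each of size at most $n/q^{k-2}$, so each can be split into at most $q$ pieces of size at most $n/q^{k-1}$, for a total of at most $\lfloor m/q\rfloor\cdot q\le m$ pieces. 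Moreover $|R^{(q)}|\le \lfloor m/q\rfloor\cdot n/q^{q-1}\le mn/q^q\le m$, so phase $q+1$ assigns at most one job per machine and needs time at most $\max_{j\in R^{(q)}}(c(\{j\})+p_j)\le\OPT$ by \Cref{lem:opt-lower}.

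The core of the proof is the bound $\sum_{k=1}^q T_k=O(q)\OPT$, where $T_k$ is the duration of phase $k$. I would split $T_k\le \sigma_k+\varepsilon_k$, where $\sigma_k=\max_i c(Y_i^{(k)})\le\OPT$ is the time by which every machine has finished its setup (setups proceed in parallel) and $\varepsilon_k$ is the remaining duration. The essential observation is that, by the phase-ending rule, at every time $t<T_k$ strictly more than $\lfloor m/q\rfloor$ machines are still processing, so in the execution sub-interval the aggregate work rate is at least $m/q$. The execution work produced in phase $k$ equals the useful work $\Delta_k:=p(R^{(k-1)})-p(R^{(k)})$ (total execution time of jobs that finish in phase $k$) plus at most one wasted in-progress job per preempted machine, totalling at most $\lfloor m/q\rfloor\cdot\max_j p_j\le (m/q)\OPT$. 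Dividing this work bound by the rate $m/q$ yields $\varepsilon_k\le q\Delta_k/m+\OPT$.

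Summing gives $\sum_{k=1}^q\sigma_k\le q\OPT$ and, via the telescoping identity $\sum_k\Delta_k=p(J)-p(R^{(q)})\le p(J)\le m\OPT$ (again by \Cref{lem:opt-lower}), also $\sum_{k=1}^q\varepsilon_k\le q\cdot p(J)/m+q\OPT\le 2q\OPT$. Adding the cost of phase $q+1$ gives total makespan at most $q\OPT+2q\OPT+\OPT=O(q)\OPT=O(\log n/\log\log n)\OPT$. I expect the main obstacle to be the careful book-keeping in the setup/execution split of $T_k$: on each individual machine setup and execution happen sequentially, so one must argue at the phase level that the max per-machine setup $\sigma_k$ absorbs all of the setup time in parallel, and that the remaining time is charged only against useful execution work plus a strictly bounded amount of wasted work (at most $\lfloor m/q\rfloor$ partial jobs per phase); without this telescoping of the $\Delta_k$ one would obtain only an $O(q^2)$ bound.
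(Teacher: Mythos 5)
Your proof follows essentially the same route as the paper's: bound each phase's length by a setup term of at most $\OPT$ plus the execution work done in that phase (the useful work $p(R^{(k-1)}\setminus R^{(k)})$ plus at most $\lfloor m/q\rfloor$ partially executed jobs) divided by the guaranteed $\lfloor m/q\rfloor$ active machines, then telescope $\sum_k p(R^{(k-1)}\setminus R^{(k)})\le p(J)\le m\cdot\OPT$ and add $\OPT$ for phase $q+1$. The one item you omit is the preemption wind-down: in this model a preempted machine only becomes available after up to $c(X)+\max_{j\in X}p_j\le 2\OPT$ additional time, so your claim that more than $\lfloor m/q\rfloor$ machines are productively executing at every instant of the phase fails during that tail; the paper charges an extra additive $2\OPT$ per phase for it, which leaves the $O(q)=O(\log n/\log\log n)$ bound unchanged.
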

\begin{proof}
Let $q$ be an integer such that $q^q\ge n>(q-1)^{q-1}$.
If $m<2q$, then the competitive ratio is $m=\O(q)=\O(\log n/\log\log n)$ by \Cref{thm:single-np-m}.
Thus, we assume $m\ge 2q$.

Let $\OPT$ be the optimal makespan and let $\pmax = \max_{j\in J} p_j$.
In addition, let $\tau_k$ be the time length of phase $k\in[q+1]$.
By the definition of the algorithm, the jobs in $R^{(k-1)}\setminus R^{(k)}$ are completed in phase $k\in[q]$.
Thus, the total execution time of jobs in phase $k\in[q]$ is at most $p(R^{(k-1)}\setminus R^{(k)})+\pmax\cdot\lfloor m/q\rfloor$ because at most $\lfloor m/q\rfloor$ jobs are partially executed.
Taking into account the setup time of at most $\max_{i\in[m]}c(X_i)$ and the time for preemption of at most $\max_{i\in[m]}c(X_i)+p_{\max}$, the length of phase $k\in[q]$ is
\begin{align*}
\tau_k
&\le \max_{i\in[m]}c(X_i)+\frac{p(R^{(k-1)}\setminus R^{(k)})+\pmax\cdot\lfloor m/q\rfloor}{\lfloor m/q\rfloor}+\left(\max_{i\in[m]}c(X_i)+p_{\max}\right)\\
&\le \frac{q}{m-q}\cdot p(R^{(k-1)}\setminus R^{(k)})+4\OPT
\le \frac{2q}{m}\cdot p(R^{(k-1)}\setminus R^{(k)})+4\OPT,
\end{align*}
where the second inequality is by \Cref{lem:opt-lower} and the third is by $m\ge 2q$.
In addition, by \Cref{lem:opt-lower}, the time length $\tau_{q+1}$ of phase $q+1$ is
\begin{align*}
\tau_{q+1}\le \max_{j\in J}(c(\{j\})+p_j)\le \OPT.
\end{align*}
Thus, the makespan of \Cref{alg:single-p} is
\begin{align*}
\sum_{k=1}^{q+1} \tau_k
&\le (4q+1)\OPT+\frac{2q}{m}\cdot \sum_{k=1}^q p(R^{(k-1)}\setminus R^{(k)})\\
&\le (4q+1)\OPT+\frac{2q}{m}\cdot p(J)
\le (4q+1)\OPT+2q\cdot \OPT=\O(q)\cdot\OPT.
\end{align*}
Hence, the competitive ratio of \Cref{alg:single-p} is $\O(q)=\O(\log n/\log\log n)$.
\end{proof}

\begin{remark}
Similar to~\Cref{rem:sUETS}, suppose that we take an $m$-partition $(X_1,\dots,X_m)$ of the jobs $J$ that is an $\alpha$-approximation of the problem \eqref{prob:partition} at line~\ref{line:partition} in \Cref{alg:single-p}.
Then, the competitive ratio of \Cref{alg:single-p} becomes $\O\big(\alpha\cdot\frac{\log n}{\log\log n}\big)$-competitive for the preemptive sUETS problem.
Moreover, by setting $q$ to be the integer that satisfies $q^q\ge n^\alpha>(q-1)^{q-1}$, we can slightly improve the competitive ratio to $\O\big(\frac{\log n^\alpha}{\log\log n^\alpha}\big)$.
\end{remark}

\subsection{Lower bounds}\label{subsec:single-lower}
In this subsection, we show that the algorithms provided in \Cref{subsec:single-alg} are asymptotically optimal with respect to both the number of machines and the number of jobs.

We first provide a lower bound for the non-preemptive setting.
\begin{theorem}\label{thm:single-np-hard}
The competitive ratio of the non-preemptive sUETS problem is at least $\Omega(m)$ and $\Omega(n^{1/3})$ even for constant setup times.
\end{theorem}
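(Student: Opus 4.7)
The plan is to exhibit, for any deterministic online algorithm $A$ and any $m\geq 2$, a single adversarial instance with $n=m^3$ jobs, $m$ machines, and constant setup times $c(X)=1$ for $X\neq\emptyset$ that forces $A$ to have makespan at least $\Omega(m)$ times the optimum. Since $n=m^3$ gives $m=n^{1/3}$, the same construction simultaneously yields both the $\Omega(m)$ and $\Omega(n^{1/3})$ bounds. First I would simulate $A$ on the ``trivial'' instance $I_0$ in which every job has execution time $0$, and let $B_1, B_2, \ldots, B_K$ denote the batches $A$ assigns in order, so that $(B_1,\ldots,B_K)$ partitions $J$. The argument then splits into two cases depending on the sizes $|B_i|$.

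In the first case, some batch $B_i$ satisfies $|B_i|\geq m$. I would construct an instance $I_1$ identical to $I_0$ except that $m$ chosen jobs inside $B_i$ have execution time $T$, for a large parameter $T$. Because all prior batches $B_1,\ldots,B_{i-1}$ contain only zero-execution jobs in both instances, $A$ receives identical completion feedback on $I_0$ and $I_1$ up to the moment it dispatches $B_i$, and hence selects the same $B_i$ on $I_1$. Once $B_i$ begins, non-preemption forces $A$ to run it to completion, incurring at least $1+mT$ time. Meanwhile $\OPT(I_1)\leq 1+T$ by placing one heavy job on each of the $m$ machines together with arbitrary light jobs. Taking $T$ sufficiently large makes the ratio $(1+mT)/(1+T)$ at least, say, $m/2$.

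In the second case, every $B_i$ satisfies $|B_i|<m$, which by pigeonhole forces $K>n/m=m^2$. Taking $I_0$ itself as the adversarial instance, each of the $K$ batches takes exactly one unit of time, so the makespan of $A$ is at least $\lceil K/m\rceil > m$, while $\OPT(I_0)=1$ (one batch on one machine suffices). The ratio is again $\Omega(m)$.

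The step I expect to require the most care is the invariance argument in Case 1: the modification from $I_0$ to $I_1$ must not alter $A$'s decisions before $B_i$ is dispatched, so that the specific batch $B_i$ identified from the simulation on $I_0$ really does appear on $I_1$. This relies on the prior batches consisting entirely of zero-execution jobs in both instances, which makes $A$'s observations identical up to the start of $B_i$, and on the non-preemptive nature of the problem, which deprives $A$ of any recourse once the heavy jobs inside $B_i$ are revealed mid-execution.
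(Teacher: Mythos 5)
Your proposal is correct and follows essentially the same argument as the paper: the same instance ($n=m^3$, constant setup times), the same case split on whether some dispatched batch has at least $m$ jobs, and the same indistinguishability argument for planting the heavy jobs in that batch (the paper simply uses execution time $1$ for heavy jobs rather than a large parameter $T$, which already suffices since the optimum is then at most $2$).
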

\begin{proof}
Let $n=m^3$ and $c(X)=1$ for any non-empty batch $X\subseteq J$ (i.e., constant setup times).
We will set the execution time $1$ for at most $m$ jobs (referred to as \emph{heavy jobs}) and $0$ for the other jobs (referred to as \emph{light jobs}) depending on the behavior of a given online algorithm.
It is not difficult to see that the optimal makespan is at most $2$.

We fix an online algorithm.
Let us consider its behavior when every job is light. 
Suppose that every batch assigned to machines consists of at most $m$ jobs.
Then, the algorithm assigns at least $n/m=m^2$ batches to the machines in total. Consequently, some machines must process at least $m$ batches, which requires $m$ setups.
Thus, the makespan is at least $m$ in this case, and we have done.

Conversely, suppose that the algorithm allocates a batch consisting of more than $m$ jobs in the instance where every job is light. 
In another instance where $m$ jobs in the first such batch are set to be heavy, the algorithm's behavior is the same until the batch is assigned.
Thus, the algorithm assigns all the heavy jobs to one machine, and the makespan is at least $m$.
Therefore, in both cases, the competitive ratio is at least $\Omega(m)=\Omega(n^{1/3})$.
\end{proof}

Next, we provide a lower bound for the preemptive setting.
\begin{theorem}\label{thm:single-p-hard}
The competitive ratio of the preemptive sUETS problem is at least 
$\Omega(m)$ and $\Omega(\log n/\log\log n)$ even for constant setup times.
\end{theorem}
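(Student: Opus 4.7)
The plan is to establish both lower bounds via adversarial instances with constant setup times $c(X) = 1$ for every non-empty $X \subseteq J$, paralleling and extending the proof of \Cref{thm:single-np-hard}. In both cases, the adversary will choose execution times in $\{0,1\}$ adaptively based on the algorithm's decisions and revelations (execution times being committed only at job completion).

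For the $\Omega(m)$ bound, I take $n = m^3$ jobs and give the adversary a budget of at most $m$ heavy jobs (execution time $1$), so that OPT $\le 2$ by the construction from \Cref{thm:single-np-hard}. The key quantity is the total number $b$ of batch initiations that the algorithm performs, counting every attempt regardless of whether it is later preempted. Each initiation incurs a full setup time of $c(X)=1$ on its assigned machine, so the total machine-time dedicated to setups is at least $b$. If $b \ge m^2$, pigeonhole gives some machine accumulating $\ge m$ units of setup time, hence makespan $\Omega(m)$. Otherwise, $b < m^2$ and pigeonhole on jobs yields some batch $B$ with $|B| > n/b \ge m$. The adversary then reveals the first $m$ jobs executed in $B$ as heavy. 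If $\ALG$ does not preempt $B$, that machine spends $\ge m+1$ time, so we are done. If $\ALG$ preempts, I extend the argument iteratively: the preempted attempt has already consumed setup time, and the uncompleted heavy jobs re-enter the pool, allowing the adversary to reattack subsequent large batches. Iterating shows that the algorithm either crosses the $b \ge m^2$ threshold or forces $\Omega(m)$ heavy-job processing on one machine.

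For the $\Omega(\log n /\log \log n)$ bound, I let $q$ be the integer satisfying $q^q \ge n > (q-1)^{q-1}$, so $q = \Theta(\log n /\log \log n)$, and use a hierarchical adversary with $q$ rounds that mirrors the phase structure of \Cref{alg:single-p}. At each round $k$, the adversary maintains an active set of roughly $n/q^{k-1}$ jobs and reveals heavy/light types with a bounded heavy budget so that (i) a symmetric hidden partition keeps OPT $= O(1)$ throughout and (ii) every algorithmic strategy incurs $\Omega(1)$ additional delay during the round, either by committing to a large batch whose revealed heavy content then blocks a machine, or by splitting into many small batches, each paying $c(X)=1$ setup cost. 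Summing over the $q$ rounds yields makespan $\Omega(q) = \Omega(\log n /\log \log n)$.

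The main obstacle is the amortized analysis of preemption in the $\Omega(m)$ argument. Because preempting a batch $X$ only locks its machine for $c(X)+p_{\max} = O(1)$ time and releases uncompleted heavies back to the pool, an algorithm could in principle use preemption to cheaply "probe" batches and redistribute heavies across all $m$ machines in parallel. The proof must rule this out by showing that each probe either completes a heavy (exhausting adversary budget only after $m$ probes, which collectively cost $\Omega(m^2)$ machine-time) or leaves all heavies in the pool (forcing the algorithm to eventually revisit them under renewed adversarial attack). A parallel challenge arises in the $\Omega(\log n /\log \log n)$ argument, where the round-based revelation strategy must remain consistent with the adversary's deferred commitments at every round boundary while preserving the low-OPT invariant.
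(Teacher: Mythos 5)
There is a genuine gap, and it is concentrated in your $\Omega(m)$ argument. Your choice of $n=m^3$ cannot yield an $\Omega(m)$ lower bound in the \emph{preemptive} setting: by \Cref{thm:single-p-n}, \Cref{alg:single-p} is $\O(\log n/\log\log n)$-competitive, so on any instance with $n=m^3$ jobs it achieves ratio $\O(\log m/\log\log m)=o(m)$. Hence no adversary whatsoever can force ratio $\Omega(m)$ against all algorithms on such instances, and the step you flag as the ``main obstacle'' is exactly where the construction collapses: once the adversary reveals that a large batch $B$ contains heavy jobs, a preempting algorithm cancels $B$ at cost $\O(1)$, splits its remaining jobs into $m$ sub-batches, and runs them in parallel, completing all $\le m$ heavies (budget of OPT $\le 2$) in $\O(1)$ additional time. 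Iterating the attack does not help, because each round either burns adversary budget or costs the algorithm only $\O(1)$, and with $b<m^2$ setups spread over $m$ machines the per-machine setup load stays $\O(m)$ only in total, not per unit progress. Any valid $\Omega(m)$ bound here must take $n$ exponential in $m$; the paper uses $n=(m+1)^m$, so that at each integer time $t$ the adversary can re-concentrate \emph{all} remaining heavy jobs into the largest of the $m+1$ groups (the $m$ machines' current batches plus the unassigned pool), which still has $\ge(m+1)^{m-t}$ jobs. Since setups cost $1$ and heavies are served first, at most one heavy job completes per unit interval, forcing makespan $\ge m$ against $\OPT=2$ --- and since $m=\Theta(\log n/\log\log n)$ for this family, the single construction delivers both bounds at once.

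Your second construction (the $q$-round hierarchical adversary) is pointed in the right direction --- it is essentially the shrinking-active-set idea above --- but it is only a sketch: the claim that ``every algorithmic strategy incurs $\Omega(1)$ additional delay'' per round is precisely what needs proof, and the concrete mechanism is the concentration step just described. Note also that the number of candidate groups the adversary must choose among at each step is $m+1$, not $q$, so shrinking the active set by a factor of $q$ per round is not justified when $m+1>q$; you would need $n\ge(m+1)^{q}$, which again points back to the paper's single instance family rather than two separate constructions.
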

\begin{proof}
Let $n=(m+1)^m$. 
For every non-empty batch $X\subseteq J$, let $c(X)=1$ (i.e., constant setup times).
We set the execution time $1$ for $m$ jobs (referred to as 
 \emph{heavy jobs}) and $0$ for the remaining jobs (referred to as \emph{light jobs}).
The selection of heavy jobs depends on the online algorithm.
In addition, when an algorithm processes a batch containing heavy jobs, we treat that heavy jobs are executed first and light jobs are executed later.
It is not difficult to see that the optimal makespan is $2$.

We fix an online algorithm.
Let $X^{(1)}_i$ be the set of jobs assigned to machine $i\in[m]$ just before time $1$. Moreover, let $X^{(1)}_0$ be the set of jobs that are not assigned to any machine just before time $1$.
Let $i^{(1)}\in \argmax_{i\in\{0,1,\dots,m\}} |X^{(1)}_i|$.
Note that $|X^{(1)}_{i^{(1)}}|\ge (m+1)^{m-1}$ by $n=(m+1)^m$.
We set that all the heavy jobs are contained in $X^{(1)}_{i^{(1)}}$.
Then, just before time $2$, the number of remaining jobs is at least $(m+1)^{m-1}$, and the number of executed heavy jobs is at most $1$.

Similarly, for each $t=2,3,\dots,m$, we define 
$X^{(t)}_i$ as the set of jobs assigned to machine $i\in[m]$, and 
$X^{(t)}_0$ as the set of jobs that are remaining but not assigned to any machine, just before time $t$.
Let $i^{(t)}\in \argmax_{i\in\{0,1,\dots,m\}} |X^{(t)}_i|$.
Then, we have $|X^{(t)}_{i^{(t)}}|\ge (m+1)^{m-t}$.
We set that the remaining heavy jobs are contained in $X^{(t)}_{i^{(t)}}$.
Then, just before time $t+1$, the number of remaining jobs is at least $(m+1)^{m-t}$, and the number of executed heavy jobs is at most $t$.

For this instance, the makespan of the schedule obtained by the online algorithm is at least $m$.
Therefore, the competitive ratio is at least $m/2=\Omega(m)=\Omega(\log n/\log\log n)$.
\end{proof}

\section{Multiple Machines Batch Allocation}
In this section, we explore the mUETS problem, where a batch can be allocated to multiple machines.

\subsection{Algorithms}
We first provide an $\O(\sqrt{m})$-competitive algorithm for the non-preemptive mUETS problem.
The algorithm computes a $\lfloor\sqrt{m}\rfloor$-partition of $J$ that minimizes maximum setup time. Then, it allocates each batch to $\lfloor\sqrt{m}\rfloor$ machines.
The algorithm is formally described in \Cref{alg:multiple-np}.

\begin{algorithm}[t]
  Let $k=\lfloor \sqrt{m}\rfloor$\;
  Compute a $k$-partition $(X_1,\dots,X_{k})$ of the jobs $J$ that minimize $\max_{i\in[k]} c(X_i)$\;
  Assign batch $X_i$ to $\lfloor m/k\rfloor$ machines for each $i\in[k]$\;
  \caption{$\O(\sqrt{m})$-competitive algorithm for the non-preemptive mUETS problem}\label{alg:multiple-np}
\end{algorithm}

\begin{theorem}\label{thm:multiple-np}
\Cref{alg:multiple-np} is $\O(\sqrt{m})$-competitive for the non-preemptive mUETS problem.
\end{theorem}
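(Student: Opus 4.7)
The plan is to bound the makespan of \Cref{alg:multiple-np} by writing it as a sum of three terms — maximum setup time across the chosen batches, parallelized execution time, and a straggler term — and then bounding each one by $\O(\sqrt{m})\cdot\OPT$ using \Cref{lem:opt-lower}.

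First I would observe that the $k=\lfloor\sqrt{m}\rfloor$ chosen batches are processed in parallel on $k$ disjoint groups of $\lfloor m/k\rfloor$ machines each, since $k\cdot\lfloor m/k\rfloor\le m$. Using the upper bound on the processing time of a batch distributed over $\lfloor m/k\rfloor$ machines given in the preliminaries, the algorithm's makespan is at most
\begin{align*}
\max_{i\in[k]}\Bigl(c(X_i)+\frac{p(X_i)}{\lfloor m/k\rfloor}+\max_{j\in X_i}p_j\Bigr)
\le \max_{i\in[k]} c(X_i)+\frac{p(J)}{\lfloor m/k\rfloor}+\max_{j\in J}p_j.
\end{align*}

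The main obstacle, I expect, is bounding $\max_{i\in[k]}c(X_i)$. The idea is to exhibit a feasible $k$-partition whose max setup time is small, which will give the same bound for the optimal $k$-partition. I would start from an optimal $m$-partition $(X_1^*,\dots,X_m^*)$ and coalesce its blocks into $k$ groups of at most $\lceil m/k\rceil$ blocks each; taking unions within groups yields a $k$-partition $(Y_1,\dots,Y_k)$. By subadditivity of $c$, every $Y_j$ satisfies $c(Y_j)\le \lceil m/k\rceil\cdot\max_{i\in[m]}c(X_i^*)\le \lceil m/k\rceil\cdot\OPT$. Since the algorithm selects a $k$-partition that minimizes the maximum setup time, we conclude $\max_{i\in[k]}c(X_i)\le\lceil m/k\rceil\cdot\OPT=\O(\sqrt{m})\cdot\OPT$, where the last equality uses $k=\lfloor\sqrt{m}\rfloor$.

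For the remaining two terms, \Cref{lem:opt-lower} provides $p(J)/m\le (c(J)+p(J))/m\le\OPT$ and $\max_{j\in J}p_j\le\max_{j\in J}(c(\{j\})+p_j)\le\OPT$. Combined with $\lfloor m/k\rfloor=\Theta(\sqrt{m})$, this yields $p(J)/\lfloor m/k\rfloor=\O(\sqrt{m})\cdot p(J)/m\le\O(\sqrt{m})\cdot\OPT$, while the straggler term is just $\O(\OPT)$. Summing gives the claimed $\O(\sqrt{m})$ competitive ratio; the tiny-$m$ regime (say $m\le 4$) is harmless, as any constant ratio is already $\O(\sqrt{m})$ there.
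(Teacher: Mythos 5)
Your proposal is correct and follows essentially the same route as the paper's proof: the same three-term bound on the makespan, the same coalescing of the optimal $m$-partition into $\lceil m/k\rceil$-block groups combined with subadditivity to bound $\max_{i\in[k]}c(X_i)$ by $\lceil m/k\rceil\cdot\OPT$, and the same appeals to \Cref{lem:opt-lower} for the execution-time and straggler terms.
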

\begin{proof}
Let $(X_1^*,\dots,X_m^*)$ be an optimal schedule and let $\OPT$ be the optimal makespan.
In addition, let $k=\lfloor\sqrt{m}\rfloor$ and $(X_1,\dots,X_k)$ be a $k$-partition of $J$ that minimizes $\max_{i\in[k]} c(X_i)$.
We analyze the algorithm that allocates each batch $X_i$ to $\lfloor m/k\rfloor$ machines.
The makespan of the algorithm is at most $\max_{i\in[k]} \left(c(X_i)+\frac{p(X_i)}{\lfloor m/k\rfloor}+\max_{j\in X_i}p_j\right)$.
By the choice of $(X_1,\dots,X_k)$ and the subadditivity of the setup time $c$, we have
\begin{align}
\begin{split} 
\max_{i\in[k]} c(X_i)
&\le \max_{i\in[k]} c\left(\bigcup_{j=1}^{\lceil m/k\rceil}X^*_{\lceil m/k\rceil\cdot (i-1)+j}\right)\\
&\le \max_{i\in[k]} \sum_{j=1}^{\lceil m/k\rceil}c(X^*_{\lceil m/k\rceil\cdot (i-1)+j})
\le \left\lceil\frac{m}{k}\right\rceil\cdot\OPT, 
\end{split}
\label{ineq:multinp}
\end{align}
where we denote $X^*_i = \emptyset$ for $i>k$.
Therefore, by \Cref{lem:opt-lower} and \eqref{ineq:multinp}, the makespan is at most
\begin{align*}
\max_{i\in[k]} \left(c(X_i)+\frac{p(X_i)}{\lfloor m/k\rfloor}+\max_{j\in X_i}p_j\right)
&\le \left(\left\lceil\frac{m}{k}\right\rceil+\frac{m}{\lfloor m/k\rfloor}+1\right)\cdot\OPT
=\O(\sqrt{m})\cdot\OPT,
\end{align*}
which means that the competitive ratio of the algorithm is $\O(\sqrt{m})$.
\end{proof}

By combining \Cref{thm:multiple-np,thm:single-np-nm}, we can obtain the following corollary.
\begin{corollary}\label{cor:multiple-np-n}
There is an $\O(n^{1/4})$-competitive algorithm for the non-preemptive mUETS problem.
\end{corollary}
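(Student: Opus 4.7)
The plan is to balance the two competitive ratios already in hand by choosing, for each instance, whichever algorithm performs better as a function of $m$ relative to $n$. First I would observe that any algorithm for the non-preemptive sUETS problem is automatically a feasible algorithm for the non-preemptive mUETS problem, since allocating a batch to a single machine is a special case of allocating it to multiple machines, and the analysis of the processing time in that case is unaffected by permitting (but not using) multi-machine allocation. Consequently, \Cref{thm:single-np-nm} yields an $\O\big(\sqrt{n\shortslash m}\big)$-competitive algorithm for the non-preemptive mUETS problem as well.

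Next I would split into two regimes according to the threshold $m = \sqrt{n}$. If $m \le \sqrt{n}$, then by \Cref{thm:multiple-np} there exists an $\O(\sqrt{m})$-competitive algorithm, and in this regime $\sqrt{m} \le n^{1/4}$, so the competitive ratio is $\O(n^{1/4})$. If $m > \sqrt{n}$, then by the observation above there exists an $\O\big(\sqrt{n\shortslash m}\big)$-competitive algorithm, and in this regime $\sqrt{n\shortslash m} < \sqrt{n\shortslash\sqrt{n}} = n^{1/4}$, so again the competitive ratio is $\O(n^{1/4})$. Since $n$ and $m$ are known at the outset, the algorithm can decide which of the two procedures to invoke based on whether $m \le \sqrt{n}$ or not, and the resulting meta-algorithm is $\O(n^{1/4})$-competitive.

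There is essentially no technical obstacle here; the only thing to be careful about is the first observation that \Cref{alg:single-np} remains valid (and its analysis remains intact) for mUETS, because the proof of \Cref{thm:single-np-nm} only uses the monotone subadditivity of $c$ and the per-job lower bound from \Cref{lem:opt-lower}, both of which are features of the instance and not of the allocation model.
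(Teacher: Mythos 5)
Your proof is correct and follows essentially the same route as the paper: case-split at $m=\sqrt{n}$, invoking \Cref{thm:multiple-np} when $m\le\sqrt{n}$ and \Cref{thm:single-np-nm} when $m>\sqrt{n}$. The only difference is that you make explicit the (valid, and implicitly assumed by the paper) observation that an sUETS algorithm is feasible for mUETS.
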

\begin{proof}
If $m\le \sqrt{n}$, then the algorithm in \Cref{thm:multiple-np} is $\O(\sqrt{m})=\O(n^{1/4})$-competitive.
On the other hand, if $m>\sqrt{n}$, the competitive ratio of \Cref{thm:single-np-nm} is $\O(\sqrt{n\shortslash m})=\O(n^{1/4})$.
Thus, in either case, there is an $\O(n^{1/4})$-competitive algorithm.
\end{proof}

\begin{remark}
To account for the computational issue, suppose that we only have an $m$-partition $(X'_1,\dots,X'_m)$ of the jobs $J$ that is an $\alpha$-approximation for the problem \eqref{prob:partition}.
Let $k'=\lfloor\sqrt{\alpha m}\rfloor$.
Then, the schedule that assigns batch $\bigcup_{j=1}^{\lceil m/k'\rceil}X'_{\lceil m/k'\rceil\cdot(i-1)+j}$ to $\lfloor m/k'\rfloor$ machines for each $i\in[k']$ is $\O(\sqrt{\alpha m})$-competitive for the non-preemptive mUETS problem.
By combining this with \Cref{rem:sUETS}, we can obtain an $\O(\min\{\sqrt{\alpha}\cdot n^{1/4},\,\sqrt{n}\})$-competitive algorithm for the non-preemptive mUETS problem.
\end{remark}

\begin{algorithm}[t]
  Let $q$ be an integer such that $q^q\ge m>(q-1)^{q-1}$\;
  Compute an $m$-partition $X_1,\dots,X_m$ of the jobs $J$ that minimizes $\max_{i\in[m]} c(X_i)$\;\label{line:multiple-p-partition}
  Set $k^*\ot \lfloor\log_q m\rfloor+1$ and let $I^{(0)}\ot [m]$ be the set of unprocessed batch indices\;
  \For(\tcp*[h]{phase $k$}){$k\ot 1,2,\dots, k^*$}{
    Assign batch $X_i$ to $q^{k-1}$ machines for each $i\in I^{(k-1)}$\;
    Continue processing the assignment until the number of uncompleted batches becomes $\lfloor m/q^k\rfloor$. Once this condition is met, preempt all remaining uncompleted batches\;
    Define $I^{(k)}$ to be the set of uncompleted batch indices\;
  }
  \caption{$\O\big(\frac{\log m}{\log\log m}\big)$-competitive algorithm for the preemptive mUETS problem}\label{alg:multiple-p}
\end{algorithm}

For the preemptive mUETS problem, \Cref{alg:single-p} is also $\O(\log n/\log\log n)$-competitive, and this is asymptotically best possible as we will see in \Cref{thm:multiple-p-hard}.
Thus, even when we are allowed to assign a job batch to more than one machine, we cannot improve the competitive ratio with respect to the number $n$ of jobs.
In contrast, we show that the competitive ratio with respect to the number $m$ of machines can be exponentially improved by allowing batches to be assigned to multiple machines.

Let $q$ be an integer such that $q^q> m\ge (q-1)^{q-1}$ and let $k^*$ be $\lfloor\log_q m\rfloor+1$.
Here, $k^*\le q$ and $q^{k^*-1}\le m<q^{k^*}$.
We have $q\ge 2$ from the assumption that $m\ge 2$.

Our algorithm for the preemptive mUETS is similar to \Cref{alg:single-p} for the preemptive sUETS.
However, instead of splitting the uncompleted batch into several smaller batches of similar size, the algorithm increases the number of machines assigned to that batch.
This ensures that the competitive ratio of the algorithm does not depend on the number of jobs $n$.

Our algorithm computes an $m$-partition $(X_1,\dots,X_m)$ of the jobs $J$ that minimizes the maximum setup time $\max_{i\in[m]}c(X_i)$.
In the first phase, each batch $X_i$ is allocated to one machine.
Then, each batch is processed until it is completed, or the number of uncompleted batches becomes less than or equal to $\lfloor m/q\rfloor$.
In the $k$th phase ($k=2,3,\dots,k^*$), each uncompleted batch $X_i$ is assigned to $q^{k-1}$ machines, and it is processed until it is completed, or the number of uncompleted batches becomes $\lfloor m/q^k\rfloor$.
If the number of uncompleted batches becomes smaller than $\lfloor m/q^k\rfloor$ because multiple batches are completed simultaneously, then break the tie arbitrarily and defer the rest batches to the next phase.
Note that this allocation is feasible because $\lfloor m/q^{k-1}\rfloor\cdot q^{k-1}\le m$.
At the end of the $k^*$th phase, all the batches are completed since $\lfloor m/q^{k^*}\rfloor=0$ by $k^*=\lfloor \log_q m\rfloor+1>\log_q m$.
Our algorithm is formally described in \Cref{alg:multiple-p}.
It should be noted that this algorithm is based on a similar idea to that of \cite[Algorithm 1]{goko2022online}.

\begin{theorem}\label{thm:multiple-p}
\Cref{alg:multiple-p} is $\O\big(\frac{\log m}{\log\log m}\big)$-competitive for the preemptive mUETS problem.
\end{theorem}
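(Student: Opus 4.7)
The plan is to decompose the algorithm's makespan as $\sum_{k=1}^{k^*}\tau_k$, where $\tau_k$ is the length of phase $k$, and to show this sum is $O(q\cdot\OPT)$ with $q=\Theta(\log m/\log\log m)$. First I would set up preliminaries. From $q^q\ge m>(q-1)^{q-1}$, we have $k^*=\lfloor\log_q m\rfloor+1\le q$. Applying \Cref{lem:opt-lower} with the optimal partition as a witness for the partition-minimization term, the algorithm's partition satisfies $\max_{i\in[m]}c(X_i)\le\OPT$; the same lemma also gives $p_{\max}\le\OPT$ and $p(J)\le m\OPT$. Feasibility of assigning $q^{k-1}$ machines per batch in phase $k$ follows from the previous phase's threshold $|I^{(k-1)}|\le\lfloor m/q^{k-1}\rfloor$.

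Next I would bound each $\tau_k$. On $q^{k-1}$ machines, batch $X_i$ completes in at most $c(X_i)+p(X_i)/q^{k-1}+p_{\max}$ time, and the preemption cleanup at the end of phase $k$ costs an additional $\max_{j\in I^{(k)}}(c(X_j)+p_{\max})\le 2\OPT$. Letting $i_k\in B_k=I^{(k-1)}\setminus I^{(k)}$ be the slowest completed batch (whose completion time equals the pre-preemption phase end $\tau_k'$),
\[
\tau_k\;\le\;4\OPT+\frac{p(X_{i_k})}{q^{k-1}}.
\]
The key step is bounding $p(X_{i_k})/q^{k-1}$. Since each $j\in I^{(k)}$ did not finish in phase $k$, $c(X_j)+p(X_j)/q^{k-1}+p_{\max}>\tau_k'\ge c(X_{i_k})+p(X_{i_k})/q^{k-1}$, giving $p(X_j)>p(X_{i_k})-2q^{k-1}\OPT$. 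Summing over $j\in I^{(k)}$ with $\sum_{j\in I^{(k)}}p(X_j)\le p(J)\le m\OPT$, and using $|I^{(k)}|\ge m/(2q^k)$ for $k<k^*$, yields $p(X_{i_k})/q^{k-1}\le 2qS_k/m+2\OPT$, where $S_k:=\sum_{j\in I^{(k)}}p(X_j)$. The boundary phase $k=k^*$ is handled directly: each remaining batch is processed on $q^{k^*-1}\ge m/q$ machines and finishes in $O(q\OPT)$.

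Summing gives $\sum_k\tau_k\le O(k^*)\OPT+(2q/m)\sum_{k<k^*}S_k$. The main obstacle is controlling the second term: a crude bound $S_k\le p(J)\le m\OPT$ only yields $O(q^2)\OPT$, one factor of $q$ too large. To achieve the claimed $O(q)\OPT$ bound, I would use an amortization splitting the phases into cheap and expensive. A phase is cheap if $\tau_k\le C\OPT$ for a fixed constant $C$; at most $k^*=O(q)$ such phases contribute $O(q\OPT)$ in total. Otherwise $\tau_k>C\OPT$ forces, via the inequality above, $p(X_{i_k})\ge\Omega(q^{k-1}\OPT)$, and the slack bound then shows that each $j\in I^{(k)}$ has $p(X_j)\ge p(X_{i_k})/2$, so $S_k\ge\Omega(m\tau_k/q)$. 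Combining this with $S_k\le S_{k-1}$ and the fact that $S_k$ is non-increasing with $S_{k^*}=0$ lets one argue that the expensive phases' contributions to $(2q/m)\sum_k S_k$ telescope against the total $p(J)\le m\OPT$ budget, giving another $O(q\OPT)$ term and hence the desired total of $O(\log m/\log\log m)\cdot\OPT$.
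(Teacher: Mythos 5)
Your per-phase analysis is sound and matches the paper's up to the key step: the bound $\tau_k\le 4\OPT+p(X_{i_k})/q^{k-1}$, the observation that every surviving $j\in I^{(k)}$ satisfies $p(X_j)>p(X_{i_k})-2q^{k-1}\OPT$, and the averaging over a set of size $\Omega(m/q^k)$ are all correct, as is the treatment of phase $k^*$. The gap is in the final summation. You average over \emph{all} of $I^{(k)}$, obtaining $\tau_k\le O(\OPT)+\frac{2q}{m}S_k$ with $S_k=\sum_{j\in I^{(k)}}p(X_j)$, and then you need $\sum_{k}S_k=O(m\OPT)$ over the relevant phases. But the sets $I^{(k)}$ are nested, so a single heavy batch surviving to the end contributes to every $S_k$, and monotonicity of $S_k$ together with $S_{k^*}=0$ does not make $\sum_k S_k$ telescope --- it only gives $\sum_k S_k\le k^*\cdot p(J)$, which is the $O(q^2)\OPT$ bound you are trying to beat. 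The cheap/expensive split does not repair this: it shows each expensive phase individually has $\tau_k\le O(q)\OPT$, but there can be $\Theta(q)$ expensive phases all charged against the same work, and the inequality $\sum_{\text{expensive }k}S_k\le O(m\OPT)$ that your argument needs is simply not established (nor is it implied by $S_k$ being non-increasing). As written, the last paragraph is an unproved claim, and it is exactly the crux of the theorem.

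The fix is small and lives entirely inside your own framework: charge phase $k$ not to all survivors $I^{(k)}$ but only to the batches completed in phase $k+1$, i.e.\ to $I^{(k)}\setminus I^{(k+1)}$. This set still has cardinality $\lfloor m/q^{k}\rfloor-\lfloor m/q^{k+1}\rfloor\ge\frac{1}{2}\lfloor m/q^{k}\rfloor\ge m/(4q^{k})$, and each of its members satisfies your slack inequality $p(X_j)>p(X_{i_k})-2q^{k-1}\OPT$, so the same averaging gives $\tau_k\le 6\OPT+\frac{4q}{m}(S_k-S_{k+1})$. Now the sum genuinely telescopes: $\sum_{k<k^*}(S_k-S_{k+1})\le S_1\le p(J)\le m\OPT$, yielding $\sum_k\tau_k=O(q)\OPT$ with no case split at all. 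This is precisely the paper's argument (there the phase-$k$ length is bounded by the \emph{minimum} of $c(X_i)+p(X_i)/q^{k-1}+\pmax$ over $i\in S^{(k+1)}=I^{(k)}\setminus I^{(k+1)}$, then averaged; disjointness of the $S^{(k+1)}$ across $k$ gives the $p(J)$ budget once). So you identified the right difficulty but stopped one observation short of resolving it.
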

\begin{proof}
Let $\OPT$ be the optimal makespan and let $\pmax = \max_{j\in J} p_j$.
For each $k\in[k^*]$, let $S^{(k)}=I^{(k-1)}\setminus I^{(k)}$ be the set of indices of batches that have been completed in the $k$th phase.
By the definition of the algorithm, we have
\begin{align*}
|S^{(k)}|
&=\left\lfloor \frac{m}{q^{k-1}}\right\rfloor-\left\lfloor \frac{m}{q^{k}}\right\rfloor
\ge \left\lfloor \frac{m}{q^{k-1}}\right\rfloor-\frac{1}{q}\cdot\left\lfloor \frac{m}{q^{k-1}}\right\rfloor
= \left(1-\frac{1}{q}\right)\left\lfloor\frac{m}{q^{k-1}}\right\rfloor
\ge \frac{1}{2}\cdot\left\lfloor\frac{m}{q^{k-1}}\right\rfloor.
% &=\lfloor m/q^{k-1}\rfloor-\lfloor m/q^{k}\rfloor\\
% &\ge \lfloor m/q^{k-1}\rfloor-(1/q)\cdot\lfloor m/q^{k-1}\rfloor
% = (1-1/q)\lfloor m/q^{k-1}\rfloor
% \ge \frac{1}{2}\cdot\lfloor m/q^{k-1}\rfloor.
\end{align*}

Let $\tau_k$ be the time length of phase $k\in[k^*]$.
We first bound it for $k\in[k^*-1]$.
As every batch in $S^{(k+1)}$ is not finished in phase $k$ and the preemption takes time at most $\max_{i\in S^{(k)}} (c(X_i)+\max_{j\in X_i}p_j) \leq 2\OPT$, we have
\begin{align*}
\tau_k
&\le \min_{i\in S^{(k+1)}} \left(c(X_i)+\frac{p(X_i)}{q^{k-1}}+\pmax\right)+2\OPT\\
&\le \frac{1}{|S^{(k+1)}|}\cdot\sum_{i\in S_{k+1}} \left(c(X_i)+\frac{p(X_i)}{q^{k-1}}+\pmax\right)+2\OPT\\
&\le \sum_{i\in S^{(k+1)}}\frac{1}{|S^{(k+1)}|}\cdot \frac{p(X_i)}{q^{k-1}}+4\OPT
\le \frac{m/q^{k}}{\frac{1}{2}\cdot\lfloor m/q^{k}\rfloor}\cdot\sum_{i\in S^{(k+1)}}q\cdot\frac{p(X_i)}{m}+4\OPT.
\end{align*}
Note that $x/\lfloor x\rfloor\le 2$ if $x\ge 1$ and $m/q^k\ge 1$ for $k\in [k^*-1]$.
Thus, we have 
\begin{align}
\tau_k\le \sum_{i\in S^{(k+1)}}4q\cdot\frac{p(X_i)}{m}+4\OPT \label{eq:bound1}
\end{align}
for every $k\in[k^*-1]$.

Next, we bound the time length $\tau_{k^*}$ of phase $k^*$. 
As the batch $X_i$ for each $i\in S^{(k^*)}$ is processed by $q^{k^*-1}$ machines, we have
\begin{align}
\tau_{k^*}
&\le \max_{i\in S^{(k^*)}} \left(c(X_i)+\frac{p(X_i)}{q^{k^*-1}}+\pmax\right)\notag\\
&\le \frac{p(J)}{q^{k^*-1}}+2\OPT
< q\cdot \frac{p(J)}{m}+2\OPT
\le (q+2)\OPT. \label{eq:bound2}
\end{align}

Hence, by \eqref{eq:bound1} and \eqref{eq:bound2}, 
the makespan of the algorithm is 
\begin{align*}
\sum_{k=1}^{k^*}\tau_k
&\le \sum_{k=1}^{k^*-1}\bigg(\sum_{i\in S^{(k+1)}}4q\cdot\frac{p(X_i)}{m}+4\OPT\bigg)+(q+2)\OPT\\
&\le 4q\cdot\frac{p(J)}{m}+4(q-1)\OPT+(q+2)\OPT
\le (9q-2)\OPT=\O(q)\cdot\OPT.
\end{align*}
Therefore, the competitive ratio of \Cref{alg:multiple-p} is $\O(q)=\O(\log m/\log\log m)$.
\end{proof}

\begin{remark}
Suppose that we take an $m$-partition $(X_1,\dots,X_m)$ of the jobs $J$ that is an $\alpha$-approximation for the problem \eqref{prob:partition} at line~\ref{line:multiple-p-partition} in \Cref{alg:single-p}.
Then, the competitive ratio of \Cref{alg:multiple-p} becomes $\O\big(\alpha\cdot\frac{\log m}{\log\log m}\big)$-competitive for the preemptive mUETS problem.
Moreover, by setting $q$ to be the integer that satisfies $q^q\ge m^\alpha>(q-1)^{q-1}$, we can slightly improve the competitive ratio to $\O\big(\frac{\log m^\alpha}{\log\log m^\alpha}\big)$.
\end{remark}

\subsection{Lower Bounds}
In this subsection, we show the asymptotic optimality of our algorithms.
We first provide lower bounds for the non-preemptive case.
\begin{theorem}\label{thm:multiple-np-hard}
The competitive ratio of the non-preemptive mUETS is at least $\Omega(\sqrt{m})$ and $\Omega(n^{1/4})$ even for unweighted type-specific setup times.
\end{theorem}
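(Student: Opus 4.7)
The plan is to establish both lower bounds via a single construction and an adaptive adversary. I would use $n = m^2$ jobs partitioned into $k = \lfloor\sqrt{m}\rfloor$ types $T_1,\dots,T_k$ of equal size $m^{3/2}$, with the unweighted type-specific setup $c(X) = |\{i : T_i \cap X \neq \emptyset\}|$. Because $\sqrt{m} = n^{1/4}$ when $n = m^2$, the bounds $\Omega(\sqrt{m})$ and $\Omega(n^{1/4})$ follow from a single argument.

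Fix an online algorithm and let $B_1,B_2,\dots$ and $M_1,M_2,\dots$ denote its batch assignments. The adversary first considers the algorithm's behaviour on the all-light instance ($p_j = 0$ for all $j$), on which the offline optimum partitions the $k$ types and assigns each to $\lfloor m/k\rfloor = \Theta(\sqrt{m})$ machines in parallel, giving $\OPT = O(1)$. The proof splits into three cases. First, if some batch $B_k$ satisfies $|T(B_k)| \ge k/2$, every machine in $M_k$ incurs setup time $\ge k/2 = \Theta(\sqrt{m})$, so the algorithm already has makespan $\Omega(\sqrt{m})$ and the ratio is $\Omega(\sqrt{m})$.

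In the remaining cases every batch is narrow, $|T(B_k)| < k/2$. If additionally $|B_k \cap T_t| < \sqrt{m}$ for all $k$ and $t$, then each type's $m^{3/2}$ jobs are distributed across at least $m$ batches, so the total type-batch incidences satisfy $\sum_k c(B_k) \ge k \cdot m \ge m\sqrt{m}$. Since every batch contributes at least one machine, $\sum_i \sigma_i = \sum_k |M_k| c(B_k) \ge m\sqrt{m}$, forcing the maximum per-machine setup load $\sigma_{\max} \ge \sqrt{m}$. Otherwise some type $t^*$ has a batch $B_{k^*}$ with $|B_{k^*} \cap T_{t^*}| \ge \sqrt{m}$. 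The adversary makes $\sqrt{m}$ chosen jobs of $T_{t^*}$ heavy (with $p_j = 1$), concentrated in $B_{k^*}$. Then $p(B_{k^*}) \ge \sqrt{m}$, so the batch runs for at least $\sqrt{m}/|M_{k^*}|$ time. On this instance $\OPT = O(1)$ by placing each type on $\sqrt{m}$ machines in parallel (the $\sqrt{m}$ heavies of $T_{t^*}$ spread over $\sqrt{m}$ machines, one per machine on FCFS).

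The main obstacle lies in the third sub-case when the concentrated batch uses many machines, so $\sqrt{m}/|M_{k^*}|$ alone does not reach $\sqrt{m}$. The remedy is to couple the execution-based and setup-based accountings: the adversary chooses $t^*$ via an averaging argument on $\sum_t A_t = \sum_k |M_k| c(B_k) \le m \tau$ to ensure the algorithm has committed relatively few machine-batches to $T_{t^*}$ before any heavy job completes. Because the schedule is non-preemptive, the pre-detection commitments are irrevocable, so the algorithm is forced to absorb the concentrated execution on those few machines; any post-detection reassignment must then squeeze the remaining $\Theta(m^{3/2})$ heavy execution into the leftover machine-time budget $m \cdot (\tau - \tau_{\text{detect}})$, which combined with the machine-time bound drives $\tau = \Omega(\sqrt{m}) \cdot \OPT$. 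Reconciling the adaptive algorithm's possible reactions with the irrevocability of the committed batches is the key technical step.
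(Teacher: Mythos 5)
Your cases (a) and (b) are sound, but the gap you flag in your third sub-case is real and your sketched remedy does not close it. The difficulty is structural: you classify a batch as ``dangerous'' based on an absolute job count ($|B\cap T_{t^*}|\ge\sqrt{m}$), so the adversary's planted load of $\sqrt{m}$ heavy jobs can be diluted over up to $m$ machines, yielding only $1/\sqrt{m}$ extra time; and you cannot simply plant more heavies, since keeping $\OPT=O(1)$ caps the total heavy execution at $O(m)$, which spread over $m$ machines buys nothing. The proposed fix --- an averaging argument over $\sum_k |M_k|\,c(B_k)$ plus ``irrevocability of pre-detection commitments'' --- is not a proof: the algorithm's behavior on the modified instance coincides with the all-light run only up to the assignment of the planted batch, and after that point nothing in your accounting prevents it from finishing quickly, because the planted batch itself may already be on many machines.

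The paper's proof resolves exactly this by making the case split depend on the number of machines $s_\ell$ assigned to each batch rather than on absolute densities. It uses $m$ types of $m$ jobs each ($n=m^2$), sets $L=\{\ell\,:\, s_\ell\ge\sqrt{m}\}$, and argues: (i) if the batches in $L$ collectively touch $\ge m/2$ types, the total machine-setup time $\sum_\ell c(X_\ell)s_\ell$ already forces makespan $\Omega(\sqrt{m})$; (ii) if some batch with $s_\ell<\sqrt{m}$ satisfies $|X_\ell|>\sqrt{m}\,s_\ell$, the adversary plants $\sqrt{m}\,s_\ell\le m$ heavy jobs there, forcing that batch alone to run for $\ge\sqrt{m}$ time on its $s_\ell$ machines while $\OPT$ stays $O(1)$; (iii) otherwise at least $m^2/2$ jobs must sit in batches with $|X_\ell|\le\sqrt{m}\,s_\ell$, so $\sum_{\ell\notin L}s_\ell\ge m^{3/2}/2$ and the setup accounting again gives $\Omega(\sqrt{m})$. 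The key point you are missing is that ``dense'' must mean dense \emph{relative to the machine count of the batch}, and the heavy-job plant must be reserved for batches on few machines; the wide-machine batches are handled purely by setup accounting. Your $\sqrt{m}$-type construction could likely be repaired along the same lines, but as written the argument does not establish the bound.
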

\begin{proof}
For each type $i\in[m]$, we prepare a group of $m$ jobs $J_i=\{(i-1)\cdot m+j\mid j\in[m]\}$.
The set of jobs is $J=\bigcup_{i=1}^m J_i$, and hence $n=m^2$.
We employ the unweighted type-specific setup times, i.e., $c(X)=|\{i\in[m]\mid X\cap J_i\ne\emptyset\}|$ for each batch $X\subseteq J$.
We will set the execution time $1$ for at most $m$ jobs (referred to as \emph{heavy jobs}) and $0$ for the remaining jobs (referred to as \emph{light jobs}).
The selection of heavy jobs depends on the online algorithm.
The optimal makespan is at most $3$ because the heavy jobs can be processed in time $2$, and the light jobs can be processed in time $1$.

We fix an online algorithm. 
Suppose that every job is light.
Let $X_1,\dots,X_r$ be the series of non-empty batches that the algorithm assigns to machines.
% the algorithm completes the jobs by processing batches of $X_1,\dots,X_r$. 
Note that $(X_1,\dots,X_r)$ is a partition of $J$.
In addition, suppose that the algorithm assigns batch $X_\ell$ to a set of $s_\ell$ machines for each $\ell\in [r]$.
Then, by evaluating the total setup time, the makespan is at least
\begin{align*}
\frac{1}{m}\cdot\sum_{\ell\in[r]} c(X_\ell)\cdot s_\ell
=\frac{1}{m}\cdot\sum_{\ell\in[r]} |\{i\in[m] \mid X_\ell\cap J_i\ne\emptyset\}|\cdot s_\ell.
\end{align*}

Let $L=\{\ell\in[r]\mid s_\ell\ge \sqrt{m}\}$.
We consider three cases based on the behavior of the algorithm.
First, if $\sum_{\ell\in L} |\{i\in[m] \mid X_\ell\cap J_i\ne\emptyset\}|\ge m/2$, then the makespan is at least
\begin{align*}
\frac{1}{m}\cdot\sum_{\ell\in L} |\{i\in[m] \mid X_\ell\cap J_i\ne\emptyset\}|\cdot \sqrt{m}
\ge \frac{1}{m}\cdot \frac{m}{2}\cdot\sqrt{m}
=\frac{\sqrt{m}}{2}=\Omega(\sqrt{m}).
\end{align*}

Next, if $|X_\ell|>\sqrt{m}\cdot s_\ell$ for some $\ell\in [r]\setminus L$, consider an instance where $\sqrt{m}\cdot s_\ell~(\le m)$ jobs in one of such $X_\ell$ are set to be heavy.
Then, in this case, the makespan of the algorithm for this instance is at least $|X_\ell|/s_\ell\ge \sqrt{m}$.

Finally, assume that $\sum_{\ell\in L} |\{i\in[m] \mid X_\ell\cap J_i\ne\emptyset\}|< m/2$ and $|X_\ell|\le \sqrt{m}\cdot s_\ell$ for every $\ell\in [r]\setminus L$.
We analyze the makespan of the instance where every job is light.
Let $I=\{i\in [m]\mid X_\ell\cap J_i=\emptyset~(\forall \ell\in L)\}$ be the index set of groups that are disjoint from $X_\ell$ for all $\ell \in L$.
% By $\sum_{\ell\in L} |\{i\in[m] \mid X_\ell\cap J_i\ne\emptyset\}|< m/2$, the number of groups not containing a job to be processed in a batch included in $L$ is at least $m/2$.
Since $\sum_{\ell\in L} |\{i\in[m] \mid X_\ell\cap J_i\ne\emptyset\}|< m/2$, 
% Thus, letting $I=\{i\in [m]\mid X_\ell\cap J_i=\emptyset~(\forall \ell\in L)\}$, 
we have $|I|\ge m/2$.
Consequently, we have
\begin{align*}
\sum_{\ell\in [r]\setminus L} s_\ell
&\ge \frac{1}{\sqrt{m}}\sum_{\ell\in [r]\setminus L}|X_\ell|
=\frac{1}{\sqrt{m}}\cdot|\{j\in J\mid j\not\in X_\ell~(\forall \ell\in L)\}|\\
&\ge \frac{1}{\sqrt{m}}\left|\bigcup_{i\in I}J_i\right|
\ge \frac{1}{\sqrt{m}}\cdot\frac{m}{2}\cdot m 
= \frac{m^{3/2}}{2}.
\end{align*}
This implies that the makespan is at least
\begin{align*}
\frac{1}{m}\cdot\sum_{\ell\in[r]} |\{j\in[m] \mid X_\ell\cap J_j\ne\emptyset\}|\cdot s_\ell
&\ge \frac{1}{m}\cdot\sum_{\ell\in [r]\setminus L} |\{j\in[m] \mid X_\ell\cap J_j\ne\emptyset\}|\cdot s_\ell\\
&= \frac{1}{m}\cdot\sum_{\ell\in [r]\setminus L} s_\ell
\ge \frac{1}{m}\cdot\frac{m^{3/2}}{2} = \frac{\sqrt{m}}{2} = \Omega(\sqrt{m}).
\end{align*}

Therefore, in all cases, the makespan of the algorithm is at least $\Omega(\sqrt{m})$ while the optimal makespan is at most $3$. 
This implies that the competitive ratio is at least $\Omega(\sqrt{m})$.
In addition, as $n=m^2$, the competitive ratio is also at least $\Omega(n^{1/4})$.
\end{proof}

Next, we provide lower bounds for the preemptive case.
\begin{theorem}\label{thm:multiple-p-hard}
The competitive ratio of the preemptive mUETS problem is at least $\Omega(\log m/\log\log m)$ and $\Omega(\log n/\log\log n)$ even for unweighted type-specific setup times.
\end{theorem}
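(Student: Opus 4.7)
My plan is to mimic the adversarial construction of Theorem~\ref{thm:single-p-hard}, modified so that the argument still bites when the algorithm is allowed to spread a single batch across many machines. Take $q$ to be the integer with $q^q \ge m > (q-1)^{q-1}$, so $q = \Theta(\log m/\log\log m)$, and use $m$ types $[m]$ with roughly $N = m$ jobs per type, giving $n = m^2$ and hence $\log n/\log\log n = \Theta(q)$. I would use the unweighted type-specific setup function $c(X) = |\{t : X \cap J_t \ne \emptyset\}|$. Execution times are declared adaptively: each job is either heavy (execution time $1$) or light ($0$), with a global budget of at most $m$ heavy jobs and at most one heavy job per type. The offline optimum then places each type on its own machine and achieves makespan at most $2$, regardless of how the adversary distributes the heavy jobs.

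The adversary runs in $q$ phases of constant length. Entering phase $k$, let $U_k$ be the set of types that contain a heavy job not yet finished by the algorithm. I would maintain the invariant $|U_k| \ge m/q^{k-1}$, so that after $q-1$ phases some heavy job is still outstanding and the algorithm's makespan is $\Omega(q)$. During phase $k$, the algorithm is allowed to start, preempt, and restart arbitrary batches, but its total machine-seconds are capped at $m \cdot (\text{phase length})$. To ``clear'' a type $t \in U_k$ within the phase, at least one batch containing $t$ must be active long enough for its setup ($|T_B|$ units) to complete and for the heavy job of $t$ to be executed. Thus, for a batch with type-set $T_B$ run on $s_B$ machines for duration $\tau_B$, the number of heavy-carrying types it can clear is at most $s_B \max(\tau_B - |T_B|, 0)$, while it consumes $s_B \tau_B$ machine-seconds. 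Summing over all batches in the phase and applying Cauchy--Schwarz (or a direct convexity argument) against the constraint $\sum s_B \tau_B \le m \cdot O(1)$ should yield that the number of cleared types is at most $(1 - 1/q)|U_k|$, which closes the induction.

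The main obstacle, and the reason this is harder than Theorem~\ref{thm:single-p-hard}, is precisely the multi-machine freedom: a batch on $s$ machines can chew through $s$ heavy jobs per unit execution time, so one cannot simply count ``machine slots'' as in the sUETS proof. The counter is that such a batch pays setup $|T_B|$ on each of its $s$ machines, so aggressive parallelism is penalised by aggressive setup cost. Formalising the trade-off ``many types per batch implies long setup, few types per batch implies few cleared types per machine-second'' into a rigorous pigeonhole/counting bound that is robust to preemption and to rearrangement across phases is the technical heart of the proof; I expect this to require a careful accounting that folds preemption losses (bounded by $c(X) + \max_{j\in X} p_j$) into the machine-second budget.

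The $\Omega(\log n/\log\log n)$ half follows immediately from the same construction: since $n = m^2$, we have $\log n/\log\log n = \Theta(\log m/\log\log m) = \Theta(q)$, so the identical instance witnesses both lower bounds.
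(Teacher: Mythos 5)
Your construction has a fatal flaw before the technical heart is even reached: the cap of ``at most one heavy job per type,'' combined with $m$ types and $m$ machines, makes the instance trivially easy online. The algorithm that assigns each type $J_t$ as a single batch to its own machine (which requires no knowledge of execution times) pays setup $1$ and total execution at most $1$ per machine, for a makespan of at most $2$ on \emph{every} instance in your family; hence no super-constant lower bound can be extracted. Relatedly, your per-phase clearing bound is false as stated: a singleton batch $\{j\}$ on one machine clears the (unique) heavy job of its type using $2$ machine-seconds, so with budget $m\cdot O(1)$ the algorithm can clear all of $U_k$ in a single constant-length phase, and the invariant $|U_{k+1}|\ge |U_k|/q$ cannot be maintained. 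The adversary's power in this problem must come from being able to concentrate \emph{many} heavy jobs into whichever groups the algorithm has neglected, not one per type.

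The paper's proof repairs exactly this. It uses only $q^q\le m$ types but $m^2$ jobs per type, and a job is declared heavy iff it \emph{starts} strictly before a per-group deadline $t$, where the groups assigned deadline $t$ (the set $S_t$, of size $q^{q+1-t}-q^{q-t}$) are chosen greedily as those with the \emph{largest} number $\kappa_i(t)$ of machines currently committed to them. This has two consequences your sketch lacks: (i) the one surviving group $S_{q+1}=\{i^*\}$ has $m^2$ jobs, all heavy if started before time $q+1$, which at most $m$ machines cannot finish by then, forcing makespan $\ge q+1$; and (ii) the total number of heavy jobs started in $[t,t+1)$ is at most $\sum_{i\in\bigcup_{\tau>t}S_\tau}\kappa_i(t)\le m\cdot q^{q-t}/q^{q+1-t}=m/q$ precisely because the neglected groups have the \emph{smallest} $\kappa_i(t)$, so the total heavy count is at most $m$ and the offline optimum stays $O(1)$. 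Your machine-second/Cauchy--Schwarz accounting does not substitute for this $\kappa$-based argument, and the averaging direction you need (setup cost penalising parallelism) is exactly what the counting in (ii) formalises. The final step you state --- $n\le\mathrm{poly}(m)$ so both bounds coincide --- is fine and matches the paper.
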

\begin{proof}
Let $q$ be the positive integer such that $q^q\le m< (q+1)^{q+1}$.
For each type $i\in[q^q]$, we prepare a group of jobs $J_i=\{(i-1)\cdot m^2+j\mid j\in[m^2]\}$.
The set of jobs $J$ is defined as $\bigcup_{i=1}^{q^q}J_i$. Hence, $n=|J|=q^q\cdot m^2~(\le m^3)$.
We employ the unweighted type-specific setup times, i.e., $c(X)=|\{i\in[q^q]\mid X\cap J_i\ne\emptyset\}|$ for each batch $X\subseteq J$.
The execution time of each job will be set to be $1$ or $0$, depending on the online algorithm.
We refer to jobs with the execution time of $1$ and $0$ as \emph{heavy} and \emph{light}, respectively.

When an algorithm assigns a batch $X\subseteq J$ to $k$ machines, we assume that the batch is processed as follows.
Let $i_1,\dots,i_\ell$ be the types of jobs in the batch $X$, i.e., $\{i\in[q^q]\mid J_i\cap X\ne\emptyset\}=\{i_1,\dots,i_\ell\}$.
Initially, each of the $k$ machines prepares to execute jobs in $X\cap J_{i_1}$ with a setup time of $1$.
If a machine is available and there is a remaining (uncompleted and not in progress) job in $X\cap J_{i_1}$, it starts executing that job. If there is no such job, the machine prepares to execute jobs in $X\cap J_{i_2}$ with a setup time of $1$.
Each machine continues this process, executing jobs in the order of groups $i_1,\dots,i_\ell$.
Note that such an assumption does not make the problem more difficult than the assumption that every machine incurs a setup time $c(X)$ at the beginning.

We fix an online algorithm.
We partition the groups $[q^q]$ into $q+1$ sets $S_1,S_2,\dots,S_{q+1}$ based on the behavior of the algorithm, as we will describe below.
For $t\in[q+1]$, each job $j\in J_i$ with $i\in S_t$ is set to heavy if it begins to be executed strictly before time $t$.

For each time $t$ and $i\in[q^q]$, let $\kappa_i(t)$ be the number of machines that are executing or setting up to execute jobs in $J_i$ just before time $t$.
Define $S_1\subseteq [q^q]$ to be the set of groups with the $q^q-q^{q-1}$ largest values of $\kappa_i(1)$.
Hence, $|S_1|=q^q-q^{q-1}$ and $\kappa_i(1)\ge \kappa_{i'}(1)$ for any $i\in S_1$ and $i'\not\in S_1$.
Similarly, for each $t=2,3,\dots,q$, define $S_t$ sequentially to be the set of $i\in [q^q]\setminus\bigcup_{\tau=1}^{t-1} S_{\tau}$ such that $|S_t|=q^{q+1-t}-q^{q-t}$ and $\kappa_i(t)\ge \kappa_{i'}(t)$ for any $i \in S_t$ and $i'\in [q^q]\setminus\bigcup_{\tau=1}^{t} S_{\tau}$.
Finally, define $S_{q+1}$ to be the set of remaining indices, i.e., $S_{q+1}=[q^q]\setminus\bigcup_{\tau=1}^{q}S_{\tau}$.
We remark that $S_t$'s are disjoint, and $S_{q+1}$ is a singleton because $\sum_{\tau=1}^q |S_{\tau}|=\sum_{\tau=1}^q (q^{q+1-\tau}-q^{q-\tau})=q^q-1$.
Let $S_{q+1}=\{i^*\}$.
Since some jobs in $J_{i^*}$ must be executed at time $q+1$ or later, the makespan of the online algorithm is at least $q+1$. 

For each $t\in [q]$, let $J^{(t)}$ be the set of heavy jobs that were started to be executed in the time interval $[t,t+1)$.
We evaluate the cardinality of $J^{(t)}$ for each $t$.
Recall that it takes one unit of time to execute a heavy job. 
Hence, each machine can start to execute at most one job in $J^{(t)}$.
In addition, if a machine executes a job in $J_i$ in the time interval $[t,t+1)$, the machine must be executing or setting up to execute jobs in $J_i$ at just before time $t$.
As the heavy jobs executed in the time interval $[t,t+1)$ belong to a group in $\bigcup_{\tau=t+1}^{q+1}S_{\tau}=[q^q]\setminus\bigcup_{\tau=1}^tS_{\tau}$, and $\kappa_i(t) \leq \kappa_{i'}(t)$ for $i\in S_t$ and $i'\in[q^q]\setminus\bigcup_{\tau=1}^t S_\tau$, we have
\begin{align*}
|J^{(t)}|
\le {\textstyle\sum_{i\in \bigcup_{\tau=t+1}^{q+1}S_{\tau}}\kappa_i(t)}
\le m\cdot \frac{\big|\bigcup_{\tau=t+1}^{q+1}S_{\tau}\big|} {\big|\bigcup_{\tau=t}^{q+1}S_{\tau}\big|}
= m\cdot \frac{q^{q-t}}{q^{q+1-t}} = \frac{m}{q}.
\end{align*}
Here, recall that $\kappa_i(t) \leq \kappa_{i'}(t)$ for all $i\in \bigcup_{\tau=t+1}^{q+1} S_{\tau}$ and $i'\in S_t$.

As the machines start to process heavy jobs at a time in $[1,q+1)=\bigcup_{\tau=1}^q[\tau,\tau+1)$, the total number of heavy jobs is $\sum_{t\in[q]}|J^{(t)}|\le (m/q)\cdot q=m$.
Hence, the optimal makespan is at most $3$ because the heavy jobs can be processed in time $2$, and the light jobs can be processed in time $1$.
Therefore, the competitive ratio is at least $(q+1)/3=\Omega(\log m/\log\log m)=\Omega(\log n/\log\log n)$.
\end{proof}

\section{Conclusion and Discussion}\label{sec:discuss}
In this paper, we introduced the UETS problem and studied it in terms of competitive analysis. We obtained tight bounds of the competitive ratio in each setting.
In the following, we discuss the application of our results to variant settings.

When a batch $X\subseteq J$ is assigned to $k$ machines, we have assumed that every machine incurs a setup time $c(X)$.  
However, for instance, in the production system example, unnecessary reassignment of attachments can be skipped.
Additionally, if a machine processes multiple batches, the setup time incurred from the second or subsequent batch may be reduced. 
Our algorithmic results remain valid even for these cases as long as an optimal schedule assigns only one batch to each machine separately.
This is because, for algorithms, a decrease in setup time only contributes to decreasing the makespan.
%However, our algorithmic results and lower bounds remain valid even if this assumption does not hold as long as an optimal schedule does not assign a batch to multiple machines. 
%For algorithms, a decrease in setup time only contributes to decreasing makespan, thus the results remain valid. 
% For lower bounds, we only use unweighted type-specific setup times in the proofs. 
% Such setup times can be seen as an abstraction for the scheduling in a production system. 
% For these assignments, the setup time cannot be reduced, as, for example, skipping unnecessary reassignment of attachments is not possible.
% Thus, our lower bounds hold even without the assumption.
%As our competitive analysis provides guarantees for the worst-case scenario, it is applicable even in such a case. 
%In such cases, it is sufficient for the algorithm to process only batches that consist of jobs of a single type. 

In the preemptive setting, we assumed that a job completed at the time of preemption does not need to be executed again.
We can also consider a setting in which, if a batch process is preempted, the entire batch must be restarted from scratch. This means that even completed jobs in the batch must be executed again.
\Cref{alg:multiple-p} also works in this setting, and it is $\O(\log m/\log\log m)$-competitive because its analysis does not use the information of which jobs are completed.
However, \Cref{alg:single-p} may not achieve the competitive ratio of  $\O(\log n/\log\log n)$ as the proof does not work for this setting.
Nevertheless, if we can obtain information on the execution time of completed jobs, we can reestablish an $\O(\log n/\log\log n)$-competitive algorithm by processing the jobs that are completed once but need to be reprocessed between phases.

\section{Acknowledgments}
This work was partially supported by the joint project of Kyoto University and Toyota Motor Corporation, titled ``Advanced Mathematical Science for Mobility Society'', 
JST ERATO Grant Number JPMJER2301,
JST PRESTO Grant Number JPMJPR2122, %YK
and 
JSPS KAKENHI Grant Numbers 
JP17K12646, %HS
JP19K22841, %KM
JP20H00609, %KM
JP20H05967, %KM
JP20K19739, %YK
JP21K17708, %HS
JP21H03397, and %HS
JP25K00137. %YK, HS

\bibliographystyle{abbrvnat}
\bibliography{online}

\end{document}